\newtheorem{lem}{Lemma}
\newtheorem{thm}{Theorem}
\newtheorem{definition}{Definition}
\newtheorem{cor}{Corollary}
\newtheorem{ex}{Example}
\DeclareMathOperator{\PP}{PP}
\DeclareMathOperator{\QMA}{QMA}
\DeclareMathOperator{\BQP}{BQP}
\DeclareMathOperator{\PromiseBPP}{PromiseBPP}
\DeclareMathOperator{\PromiseMA}{PromiseMA}
\DeclareMathOperator{\BPP}{BPP}
\DeclareMathOperator{\Pp}{P}
\DeclareMathOperator{\NP}{NP}
\DeclareMathOperator{\MA}{MA}
\DeclareMathOperator{\QCMA}{QCMA}
\DeclareMathOperator{\co}{co}
\DeclareMathOperator{\PSPACE}{PSPACE}
\DeclareMathOperator{\PH}{PH}
\DeclareMathOperator{\poly}{poly}
\DeclareMathOperator{\bin}{bin}
\DeclareMathOperator{\Prob}{Prob}
\DeclareMathOperator{\TIM}{TIM}
\DeclareMathOperator{\T}{T}
\DeclareMathOperator{\Hh}{H}
\DeclareMathOperator{\CNOT}{CNOT}
\DeclareMathOperator{\Deg}{Deg}
\DeclareMathOperator{\I}{\hspace{0.14em}\mathbb{I}\hspace{0.14em}}
\begin{document}

\begin{abstract}
Diagonalization in the spirit of Cantor's diagonal arguments is a widely used tool in theoretical computer sciences 
to obtain structural results about computational problems and complexity classes 
by indirect proofs. 
The Uniform Diagonalization Theorem 
allows the construction of problems outside complexity classes while still being reducible 
to a specific decision problem.
This paper provides a 
generalization of the Uniform Diagonalization Theorem by extending it to promise problems and the complexity classes they form, e.g. randomized and quantum complexity classes. 
The theorem requires from the underlying computing model not only the decidability of its acceptance and rejection behaviour but also of its promise-contradicting indifferent behaviour -- a property that we will introduce as ``total decidability'' of promise problems.


Implications of the Uniform Diagonalization Theorem are mainly of two  kinds: 1. Existence of intermediate problems (e.g. between BQP and QMA) -- also known as Ladner's Theorem -- and 2. Undecidability if a problem of a complexity class is contained in a subclass (e.g. membership of a QMA-problem in BQP). 
Like the original Uniform Diagonalization Theorem the extension applies besides BQP and QMA to a large variety of complexity class pairs, including combinations from deterministic, randomized and quantum classes.

\end{abstract} 

\title{Uniform Diagonalization Theorem for Complexity Classes of Promise Problems including Randomized and Quantum Classes}
\author{Friederike Anna Dziemba}
\affiliation{Institut f\"ur Theoretische Physik, Leibniz Universit\"at Hannover, Germany}
\date{December 19, 2017}
\maketitle

\section{Introduction}



In the seventies two contrary results gave new insight into the structure of polynomial time reducibility and the hierarchy between the complexity class $\Pp$ of efficiently solvable problems and the broader class $\NP$ of efficiently verifiable problems. While Schaefer's dichotomy theorem \cite{schaefer} states that every naturally restricted constraint-satisfaction problem outside $\Pp$ belongs immediately to the hardest problems of $\NP$ (so-called $\NP$-complete), 
Ladner's theorem \cite{ladner} shows for every problem outside $\Pp$ the existence of an intermediate problem that is strictly simpler but still outside $\Pp$. 

In quantum complexity theory Local Hamiltonian problems assume the role that constraint-satisfaction problems play in classical complexity theory. The categorization of Local Hamiltonian problems by \cite{LHclassification} 
can be regarded as a quantum analogue of Schaefer's dichotomy result. In contrast, 
no quantum version of Ladner's theorem has been formulated until today. 

The strongest generalization of Ladner's Theorem is nowadays known as ``Uniform Diagonalization Theorem'' \cite{schoening, balcazar} and is applicable to a large variety of complexity classes. 
By generalizing the reduction notion the authors of \cite{schmidt,vollmer,reganVollmer} even realize a useful application of the theorem to complexity classes below $\Pp$.
Unfortunately, the formulation of the Uniform Diagonalization Theorem only covers classes of decision problems, whereas quantum complexity classes are formed by the broader concept of promise problems. While an algorithm for a decision problem has to work correctly on every input, an algorithm for a promise problem only has to work correctly on promised inputs. Because of this ``untotal'' property, promise problems are often avoided  and therefore neglected 
in theoretical computer sciences. But with the introduction of randomized complexity classes and at the latest with the upcoming of quantum computing the concept of promise problems clearly deserves more respect.


Promise problems are natural for semantic complexity classes such as randomized and quantum classes due to their probabilistic nature. The main purpose of a promise is to guarantee that an algorithm of reasonable running time can differentiate yes- and no-instances well enough, i.e. adheres the probabilistic error allowed by the definition of the complexity class. 
Without promises many randomized and quantum classes would not contain any canonical problems. For example, neither the randomized classes $\BPP$ and $\MA$ of efficiently solvable and verifiable problems nor the quantum analogues $\BQP$ and $\QMA$ are known to contain any complete decision problems. But they all contain complete promise problems, 
inlcuding problems of high physical relevance like the $\QMA$-complete Local Hamiltonian problem.





The adaption 
of the Uniform Diagonalization Theorem to 
randomized and quantum complexity classes requires two steps: After the preliminary section we will first extend some necessary terminology originally defined in the context of decision problems to the context of promise problems like (total) decidability and recursive (re-)presentation and show that these properties are obeyed by standard randomized and quantum complexity classes. In the next section we can then adapt the proof of the Uniform Diagonalization Theorem to promise problems and their complexity classes.

Informally the Uniform Diagonalization Theorem states that for two complexity classes $C$ and $C'$ and two problems $A\notin C$ and $A'\notin C'$, there exists another problem $B$ which inherits the property not to belong to any of the two complexity classes while still being reducible to the marked union of $A$ and $A'$ (which basically implies reducibility to the more difficult problem of the two). The complexity bound on $B$ provided by this reduction is crucial, since just finding a problem outside two complexity classes is clearly trivial if it can be chosen arbitrarily more difficult. 

In the implications section we will show that fixing one of the complexity classes and one of the problems leads to a simplified version of the theorem in the spirit of Ladner: Given a standard complexity class and a problem $A$ outside this class, one always finds another problem that lies outside the class but is strictly simpler than $A$.

Our versions of the Uniform Diagonalization Theorem and Ladner's Theorem apply to all previously mentioned randomized and quantum classes and also to combinations from both kinds. This means for example that there exists an infinite hierarchy of intermediate problems between $\QMA$ and $\BQP$ as well as between $\BQP$ and $\BPP$ (understood as set of promise problems) under the assumption that these classes are unequal. 

A second branch of implications from 
the Uniform Diagonalization Theorem will cover undecidability results for subclass membership, for example: Given a problem via a $\QMA$-description it is undecidable if it is contained in $\BQP$.



\section{Preliminaries}

\subsection{Deterministic Turing machines and Properties of Functions}

Let $\Sigma^*$ denote the set of all strings over the binary alphabet $\Sigma=\{0,1\}$.
Theoretical computer sciences mainly deals with two kinds of computational tasks:
\begin{enumerate}
\item Either computing a function $\mathbb{N}_0\rightarrow \mathbb{N}_0$ or $\Sigma^*\rightarrow\Sigma^*$ or
\item deciding a bipartite question in form of a  \emph{promise problem} $A=(A_\text{yes},A_\text{no})$ with
\begin{align*}
A_\text{yes} \cap A_\text{no}&=\varnothing\\
A_\text{yes} \cup A_\text{no} &\subseteq\Sigma^*
\end{align*}
and $A_\text{yes} \cup A_\text{no}$ called the \emph{promise}.
\end{enumerate}
Problems for that the last line holds an equality are called \emph{decision probems} and are a special case of promise problems. 
The fundamental computational model for both computational tasks is the deterministic Turing machine (DTM):


A deterministic Turing machine is an endless tape machine with a reading head starting on the first symbol of the input $x\in\Sigma^*$, which is padded at both sides with infinitely many blank symbols $\square$. Moreover, a DTM comes along with a finite set of states $S$ including an initial and possibly several final states. A computational step 
of the machine is described by the transition function 
\begin{align*}
\delta: S \times \{0,1,\square\} \rightarrow S \times \{0,1,\square\} \times \{L,R,N\}
\end{align*}
that maps the current state and the symbol at the head position to another state, overwrites the read symbol and moves the head at most by one cell (left / right / neutral).
If a final state is reached, the machine \emph{halts} and outputs the string that is written between the head position and the next blank symbol. 

\begin{figure}[h]
 \includegraphics[width=0.8\columnwidth]{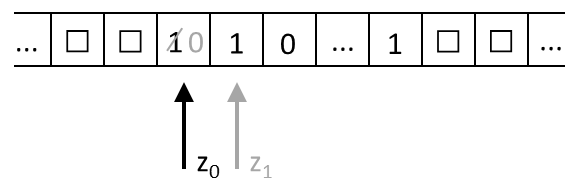}
\caption{Initial configuration of a Turing machine on input $x=110\dots 1$ and after one step according to transition function $\delta(z_0,1)=(z_1,0,R)$.}
 \label{fig:turing_machine}
\end{figure}


Given input $x$ we denote the output of a DTM $M$ by $M(x)$. Note, that $M(x)$ doesn't have to be defined for all inputs, since a DTM can also run into an infinite loop and never halt. The notion of an input can be extended to a multipartite input $x_1,x_2,\ldots\in\Sigma^*$. These inputs are then written onto the tape successivly separated by a blank symbol and we denote the output accordingly by $M(x_1, x_2,\dots)$. 

There is a 
correspondance between all strings over the binary alphabet and Turing machines via the concept of \emph{G{\"o}del numbers}. A G{\"o}del number is a binary encoding of the transition function together with a specification of the final states. Note, that the transition function can be encoded by a binary number of finite length since it is determined by finitely many transitions. If a binary number does not have the form of a valid G{\"o}del number it is interpreted as the encoding of a trivial machine that always outputs $0$. If we state in this paper that ``a Turing machine is given'' we mean that the G{\"o}del number of the machine is supplied. In the same manner a computable function is always given via the G{\"o}del number of the machine that computes the function.

\begin{definition}\label{def:decidability}
A promise problem $A=(A_\text{yes},A_\text{no})$ is \emph{decidable} iff there exists a DTM $M$ such that
\begin{align*}
\forall x\in A_\text{yes}\; M(x)=1 \text{ (``M accepts input $x$'')}\\
\forall x\in A_\text{no}\; M(x)=0\text{ (``M rejects input $x$'').}
\end{align*}
\end{definition}

The concept of G{\"o}del numbers allows the construction of the most famous undecidable decision problem: The yes-instances of the \emph{Halting problem} are exactly (the G{\"o}del numbers of) those DTMs that don't halt given their own G{\"o}del number as input.

\vspace{1em}

Computability of a function $f:\Sigma^*\rightarrow \Sigma^*$ means that there exists a DTM that for every input $x\in\Sigma^*$ outputs $f(x)$. For functions $f:\mathbb{N}_0\rightarrow\mathbb{N}_0$ it means that there exists a DTM that vor every $\bin(x)$, $x\in\mathbb{N}_0$, outputs $\bin\big(f(x)\big)$, with $\bin(\cdot)$ denoting the binary representation of natural numbers without leading zeros.


For the later discussion of complexity classes and the proof of the central theorem we need some terminology relating to the runtime of a DTM. Most important, we say that a DTM $M$ has a polynomial runtime iff there exists a polynomial $p$ over $\mathbb{N}_0$ such that for all inputs $x\in\Sigma^*$ $M$ halts in less or equal to $p(|x|)$ computational steps. 
Two more properties of functions will become relevant in this paper; the first will mainly be used to restrict resources (runtime, witness size, etc.) in complexity class definitions, the second is important for the proof of the central theorem:

\begin{definition}
A function $f:\mathbb{N}_0\rightarrow\mathbb{N}_0$ is in the set $\poly$ iff $f$ is polynomially bounded 
and there exists a DTM $M$ that for all inputs of length $n$ 
outputs $\bin\big(f(n)\big)$ in polynomial time. 
\end{definition}

\begin{definition}
A function $f:\mathbb{N}_0\rightarrow\mathbb{N}_0$ is called 
\emph{time-constructible}, iff there exists a DTM that for each input of length $n$ halts exactly in $f(n)$ steps.
\end{definition}


\begin{lem}\label{lem:timeConstructibility}
The following holds for a function $f:\mathbb{N}_0\rightarrow\mathbb{N}_0$:
\begin{enumerate}
\item $f$ is time-constructible $\Rightarrow f$ is computable.
\item $f$ is computable $\Rightarrow \exists$ time-constructible $f'\ge f$.
\end{enumerate}
\end{lem}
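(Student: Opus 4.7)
The plan is to handle the two implications separately. Part~1 is essentially a simulation argument; part~2 requires a construction that I would design more carefully, and that is where I expect the only real subtlety.

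For part~1, suppose $f$ is time-constructible, witnessed by a DTM $M$ that halts in exactly $f(n)$ steps on every input of length $n$. I would build a DTM $N$ that, given $\bin(n)$ as input, computes $\bin(f(n))$ as follows. First $N$ writes the dummy string $0^n$ on a work tape by decrementing a copy of $\bin(n)$ while appending a symbol each time until the copy reaches zero. Then $N$ simulates $M$ on $0^n$ step by step, maintaining a binary counter that is incremented after each simulated step. When the simulation reaches a final state of $M$, $N$ outputs the counter. Since $M$ halts after exactly $f(n)$ steps on $0^n$, the counter contains $\bin(f(n))$ at that moment, which is what we needed.

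For part~2, let $N$ be a DTM that on input $\bin(n)$ outputs $\bin(f(n))$. I would construct a DTM $M'$ whose computation on any input of length $n$ proceeds through an input-independent sequence of phases: (i) scan the input to locate the first blank symbol and write $\bin(n)$ on a work tape (the number of steps of this phase depends only on $n$); (ii) simulate $N$ deterministically on $\bin(n)$ until it halts with $\bin(f(n))$ on the tape; (iii) execute exactly $f(n)$ dummy steps, implemented by decrementing $\bin(f(n))$ until it reaches zero. Let $c_1(n)$ and $c_2(n)$ denote the step counts of phases~(i) and~(ii), and $c_3(n) \ge f(n)$ the step count of phase~(iii). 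Setting $f'(n) := c_1(n) + c_2(n) + c_3(n)$ gives a function with $f' \ge f$, and the DTM $M'$ witnesses time-constructibility of $f'$ since it halts after exactly $f'(n)$ steps on every input of length $n$.

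The main obstacle will be guaranteeing in part~2 that the step count of $M'$ depends only on the input length and not on the actual symbols of the input. Phase~(i) reads the input only to discover $n$ and then overwrites it, while phases~(ii) and~(iii) operate purely on work-tape content that was freshly written by $M'$ and is therefore a deterministic function of $n$. A deterministic simulation of $N$ on a fixed input $\bin(n)$ takes a fixed number of steps, so $c_2(n)$ is well-defined; the analogous statement for $c_1$ and $c_3$ follows by inspection. A minor additional concern is standardizing the head-movement overhead when transitioning between phases so that no branch depending on tape content can creep in, but this is routine tape-engineering and does not affect the argument.
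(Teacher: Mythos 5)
Your proof is correct and takes essentially the same approach as the paper's: part 1 simulates the time-constructing machine on a dummy input of length $n$ while incrementing a step counter, and part 2 computes $\bin\big(f(n)\big)$ from the input length and then pads the computation with at least $f(n)$ further steps whose count depends only on $n$. The only cosmetic difference is that the paper burns the $f(n)$ padding steps by writing $1^{f(n)}$ in unary, whereas you decrement a binary counter from $f(n)$ to zero; both yield a step count that is a function of $n$ alone and is at least $f(n)$.
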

\begin{proof}

$\quad$
\begin{enumerate}
\item Let $M_{T(f)}$ be a DTM that time-constructs $f$. Then the following machine computes $f$:

First, change the input $\bin(x)$ on the tape to its unary representation $1^x$. On this new input of length $x$ simulate $M_{T(f)}$ 
 with the following adaption: Interrupt after each original computational step (by changing into a new ``interruption state''), 
 let the head run to the end of the written tape 
 and increment a counter there starting at 0. If the originally subsequent state is a final state, let the head remain on the counter and change into the final state (i.e. the new output is the counter), otherwise let the head run back to its original position and change into the subsequent state.
\item Let $M_f$ be a DTM that computes $f$. Then the following machine time-constructs a function $f'$ with $f'\ge f$:

Replace the input $\bin(x)$ by its length $n=|\bin(x)|$. 
After simulating $M_f$, write its output in unary representation $1^{f(n)}$ and halt. Since writing onto $f(n)$ cells needs at least time $f(n)$, the new machine time-constructs a function $f'$ with $f'\ge f$.\qedhere
\end{enumerate}
\end{proof}

\subsection{Randomized and Quantum Models}

The Church-Turing thesis asserts that every reasonable computational model can be simulated by the model of a deterministic Turing machine and no contradiction has been found until today. But especially when caring about complexity it is useful to consider as well other computational models with plausible implementations. These models aim at deciding decision or promise problems and are usually not used for the computation of functions, hence, they just come with a notion of acceptance and rejection but not with a notion of a function output.

The first, the \emph{probabilistic Turing machine}, simply extends the model of the deterministic Turing machine by allowing a branching at every computational step, which is mathematically reflected by a transition function of the form
\begin{align*}
\delta: S \times \{0,1,\square\} \rightarrow \mathcal{P}(S\times \{0,1,\square\}\times \{L,R,N\})
\end{align*}
with $\mathcal{P}$ denoting the power set.

All terminology of DTMs is used accordingly for each branch of a PTM. 
The formalism of G{\"o}del numbers can also be easily extended to PTMs.
A PTM as a whole has a probabilistic expression for accepting and rejecting an input $x$ determined by the fraction of accepting and rejecting branches:
\begin{align*}
P_\text{acc}(x)= \frac{\#\text{ accepting branches}}{\# \text{ all branches}}\\ 
P_\text{rej}(x)= \frac{\#\text{ rejecting branches}}{\# \text{ all branches}}\text{.}
\end{align*}


The fundamental computing model in quantum information are families of quantum circuits $(\mathcal{C}_x)_{x\in\Sigma^*}$ that are made up of a series of unitary quantum gates
\begin{align*}
U_x=U_{x,l},U_{x,l-1}\dots U_{x,2}U_{x,1}
\end{align*}
acting on a Hilbert space of multiple qubits usually initialized in the state $\ket{in}=\ket{0^k}$ and closed by a final projective measurement $\Pi_\text{acc}:=\ket{1}\bra{1}_\text{out}$ of a designated output qubit (e.g. the first one). 
The probability to measure $\Pi_\text{acc}$ is considered the acceptance probability for the input $x\in\Sigma^*$: 
\begin{align*}
P_\text{acc}(x)=\bra{in}U_x^\dagger \Pi_\text{acc}U_x\ket{in}\text{.} 
\end{align*}

A useful notion of complexity classes demands that the allowed gates are of an elementary kind. In this paper we assume the standard gates $\Hh$, $\T$ and $\CNOT$ that act non-trivially only on one or two qubits. In the computational basis these gates have the following matrix representation:
\begin{alignat*}{3}
&\text{Hadamard:} \quad &\Hh &= \frac{1}{\sqrt{2}}
\begin{pmatrix}
1 &1\\
1 &1
\end{pmatrix}\\
&\text{T-gate:} \quad &\T &=
\begin{pmatrix}
1 &0\\
0 &e^{i\frac{\pi}{4}}
\end{pmatrix}\\
&\text{Controlled NOT:} \quad &\CNOT &=
\begin{pmatrix}
1 &0 &0 &0\\
0 &1 &0 &0\\
0 &0 &0 &1\\
0 &0 &1 &0
\end{pmatrix}\text{.}
\end{alignat*}
The gate set $\{\Hh, \T,\CNOT\}$ is a proper choice for universal quantum computing since it is known to be universal, i.e. any unitary operation can be approximated arbitrarily well by these gates.

Until now we have just described the structure of a quantum circuit but not the structure of a family of quantum circuits $(\mathcal{C}_x)_{x\in\Sigma^*}$ whose circuits can basically look very different for different inputs $x$. The minimum requirement on a circuit family should be the existence of an algorithm that can compute a description of the quantum circuit $\mathcal{C}_x$  for each input $x\in\Sigma^*$. Stricly speaking, a quantum circuit family is defined via a deterministic Turing machine that on input $x\in\Sigma^*$ outputs an encoding of the quantum circuit $\mathcal{C}_x$ in a G{\"o}del number style, e.g. encode

\begin{figure}[h]
 \includegraphics[width=0.7\columnwidth]{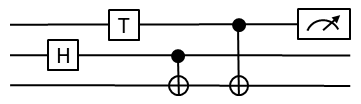}
\label{fig:quantum circuit}
\end{figure}
as the number
\begin{align*}
\underbrace{01\; \textcolor{gray}{0}\; 11}_\text{H on qubit 2}\; \textcolor{gray}{0}\;
\underbrace{10\; \textcolor{gray}{0}\; 1}_\text{T on 1}\; \textcolor{gray}{0}\;
\underbrace{11\; \textcolor{gray}{0}\; 11\; \textcolor{gray}{0}\; 111}_\text{CNOT from 2 to 3}\; \textcolor{gray}{0}\;
\underbrace{11\; \textcolor{gray}{0}\; 1\; \textcolor{gray}{0}\; 111\text{.}}_\text{CNOT from 1 to 3}
\end{align*}

Always be aware of the difference between the G{\"o}del number of (the DTM representing) a quantum circuit family and the G{\"o}del number of a specific circuit that it outputs. An output of a DTM representing a quantum circuit family that is not a valid circuit G{\"o}del number is interpreted as an encoding of the trivial circuit that never accepts. Due to the Turing machine construction procedure 
we will call a quantum circuit family obeying the restrictions of a complexity class $C$ also a $C$-machine, e.g. a $\BQP$-machine, though $\BQP$-circuit family might sound more intuitive.

A \emph{polynomial-time generated} quantum circuit family $(U_x)_{x\in\Sigma^*}$ is a quantum circuit family whose constructing DTM has polynomial runtime. Since 
a circuit's G{\"o}del number is always longer than the circuit size (number of gates), the size of the constructed circuit is upper bounded by the same polynomial. The bound on the circuit size due to the non-compressive property of the encoding justifies to call polynomial-time generated quantum circuit families ``efficient''. Instead of the G{\"o}del scheme presented here one can of course also assume a different circuit encoding. The specifics of the encoding are irrelevant for the definition of usual complexity classes as long the the encoding is  
sensible, efficient but non-compressive and guarantees that ``specific information about the structure of a circuit [is] 
computable in polynomial time from an encoding'' 
\cite{watrousReview}.

\subsection{Complexity Classes and Reductions}

A complexity class is any set of promise problems. 
Usually such a set is defined by all problems that can be decided by a certain restricted machine model and can hence be considered of similar ``complexity''. We will list here the definitions of the most important complexity classes, namely those of efficiently decidable and efficiently verifiable problems for deterministic, randomized and quantum computing.

In accordance with the literature, we define the classes $\Pp$ and $\NP$ based on deterministic Turing machines as sets of decision problems. Historically, the first complexity classes were defined as sets of decision problems (or, equivalently, of languages, which equal the yes-instances of decision problems), since it is guaranteed that each DTM with a restricted runtime decides a decision problem -- in contrast to the later defined probabilistic and quantum machines. 

\begin{definition}\label{def:P}
The complexity class $\Pp$ is the set of all decision problems that can be decided by a deterministic Turing machine of polynomial runtime.
\end{definition}

\begin{definition}\label{def:NP}
The complexity class $\NP$ is the set of all decision problems $A$ for that there exists a deterministic Turing machine M of polynomial runtime and an $m\in\poly$ such that
\begin{align*}
\forall x\in A_\text{yes} \;\exists y\in\Sigma^{m(|x|)}: \;&M(x, y)=1,\\
\forall x\in A_\text{no} \;\forall y\in\Sigma^{m(|x|)}: \;&M(x, y)=0\text{.}
\end{align*}
\end{definition}


The $y$ in the definition of $\NP$ is called a \emph{witness}. In contrast to the \emph{problem input} $x$ we will call the input $(x,y)$ the machine actually acts on the \emph{computational input} (here consisting of problem input and witness).

Next, we will define the randomized and quantum analogues of $\Pp$ and $\NP$.
With the introduction of randomized classes and at the latest with the upcoming of quantum computing, the definition of complexity classes was broadened to promise problems. While in quantum information it is common sense to understand complexity classes as sets of promise problems,
classical computer scientists are still debating if randomized classes such as $\BPP$ and $\MA$ should be considered as sets of decision problems or sets of promise problems. To avoid confusion we call the first $\BPP$ and $\MA$ and the later ones $\PromiseBPP$ and $\PromiseMA$.

\begin{definition}\label{def:BPP}
The complexity class $\BPP$ $(\PromiseBPP)$ is the set of all decision (promise) problems $A$ for that there exists a probabilistic Turing machine M of polynomial runtime such that
\begin{align*}
\forall x\in A_\text{yes}: \;&P_\text{acc}\ge\frac{2}{3},\\
\forall x\in A_\text{no}: \; &P_\text{acc}\le\frac{1}{3}\text{.}
\end{align*}
\end{definition}



\begin{definition}\label{def:MA}
The complexity class $\MA$ $(\PromiseMA)$ is the set of all decision (promise) problems $A$ for that there exists a probabilistic Turing machine M of polynomial runtime and an $m\in\poly$ such that
\begin{align*}
\forall x\in A_\text{yes} \;\exists y\in\Sigma^{m(|x|)}: \;&P_\text{acc}\ge\frac{2}{3},\\
\forall x\in A_\text{no}\;\forall y\in\Sigma^{m(|x|)}: \; &P_\text{acc}\le\frac{1}{3}\text{.}
\end{align*}
\end{definition}

The probability thresholds $c=\frac{2}{3}$ and $s=\frac{1}{3}$ are called \emph{completeness} and \emph{soundness} parameter of the complexity classes.

Often literature remains unclear, if runtime-bounded probabilistic Turing machines like $\PromiseBPP$- or $\PromiseMA$-machines have to obey the specified runtime only on promised inputs or also for non-promised inputs. Since a machine of the first kind can easily be transformed into one of the second kind (by counting the computational steps and aborting after the specified runtime with a default value as output), we assume the second case. Accordingly we assume that the following bounded-runtime generated families of quantum circuits obey their runtime specification for all inputs:

\begin{definition}\label{def:BQP}
The complexity class $\BQP$ is the set of all promise problems $A$ for that there exists a polynomial-time generated familiy of quantum circuits $(U_x)_{x\in\Sigma^*}$ on $k$ qubits such that
\begin{align*}
\forall x\in A_\text{yes}: \;&\bra{0^k}U_x^\dagger \Pi_\text{acc} U_x \ket{0^k} \ge \frac{2}{3},\\
\forall x\in A_\text{no}: \; &\bra{0^k}U_x^\dagger \Pi_\text{acc} U_x \ket{0^k} \le\frac{1}{3}\text{.}
\end{align*}
\end{definition}

\begin{definition}\label{def:QCMA}
The complexity class $\QCMA$ is the set of all promise problems $A$ for that there exists a polynomial-time generated familiy of quantum circuits $(U_x)_{x\in\Sigma^*}$ on $k+m\in\poly$ qubits such that
\begin{align*}
\forall x\in A_\text{yes}\;&\exists  y\in\Sigma^{m(|x|)}: \; \\
&(\bra{y}\otimes\bra{0^k})U_x^\dagger \Pi_\text{acc} U_x (\ket{y}\otimes \ket{0^k}) \ge \frac{2}{3},\\
\forall x\in A_\text{no}\;&\forall  y\in\Sigma^{m(|x|)}: \; \\
&(\bra{y}\otimes\bra{0^k})U_x^\dagger \Pi_\text{acc} U_x (\ket{y}\otimes \ket{0^k}) \le\frac{1}{3}\text{.}
\end{align*}
\end{definition}

\begin{definition}\label{def:QMA}
The complexity class $\QMA$ is the set of all promise problems $A$ for that there exists a polynomial-time generated familiy of quantum circuits $(U_x)_{x\in\Sigma^*}$ on $k+m\in\poly$ qubits such that
\begin{align*}
\forall x\in A_\text{yes}\;&\exists  \ket{\psi}\in\mathbb{C}^{2^{m(|x|)}}: \;\\
&(\bra{\psi}\otimes\bra{0^k})U_x^\dagger \Pi_\text{acc} U_x (\ket{\psi}\otimes \ket{0^k}) \ge \frac{2}{3},\\
\forall x\in A_\text{no}\;&\forall  \ket{\psi}\in\mathbb{C}^{2^{m(|x|)}}: \; \\
&(\bra{\psi}\otimes\bra{0^k})U_x^\dagger \Pi_\text{acc} U_x (\ket{\psi}\otimes \ket{0^k}) \le\frac{1}{3}\text{.}
\end{align*}
\end{definition}

$\QCMA$ and $\QMA$ can both be considered as quantum analogues of $\NP$. While $\QCMA$ is defined with a classical witness $y$, $\QMA$ requires a quantum witness $\ket{\psi}$. Note, that in case of $\QCMA$ and $\QMA$ the circuit-constructing DTM should not only output the encoding of the gates but also the number of witness qubits $m$.



In extension of the notion ``decidability'', we say that the ``$C$-machine $M$ decides the problem $A$'' if $M$ is the machine for that the membership of a problem $A$ in one of the above complexity classes $C$ is proven. From calling a machine either a DTM or a $C$-machine, it will always be clear if we mean the original notion of decidability or a machine deciding $A$ according to the definition 
 of the complexity class $C$.

For abbreviation reasons we denote the set of the introduced classes of promise problems by 
\begin{align*}
\mathcal{C}=\{\PromiseBPP,\PromiseMA,\BQP,\QCMA,\QMA\}\text{.}
\end{align*}

These complexity classes together with the corresponding classical classes of decision problems form the following hierarchy (each class contains the connected classes below):

\begin{center}
\begin{tikzpicture}[scale=0.8]
\node (0) at (2,-2)  [rectangle, fill=white, minimum height=0.7cm]{\;P\;};
\node (0*) at (1,-1)  [rectangle, fill=white, minimum height=0.7cm]{\;BPP\;};
\node (1b) at (4,0) [rectangle, fill=white, minimum height=0.7cm]{\;NP\;};
\node (1b*) at (3,1) [rectangle, fill=white, minimum height=0.7cm]{\;MA\;};
\node (1) at (0,0) [rectangle,  fill=white, minimum height=0.7cm] {\;PromiseBPP\;};
\node (2a) at (-2,2) [rectangle, fill=white,  minimum height=0.7cm] {\;BQP\;};
\node (2b) at (2,2) [rectangle,  fill=white, minimum height=0.7cm] {\;PromiseMA\;};
\node (3) at (0,4) [rectangle,  fill=white, minimum height=0.7cm] {\;QCMA\;};
\node (4) at (0,5.5) [rectangle, fill=white,  minimum height=0.7cm] {\;QMA\;};
\draw[line width=0.5pt] (0) to (0*);
\draw[line width=0.5pt] (0*) to (1);
\draw[line width=0.5pt] (0) to (1b);
\draw[line width=0.5pt] (1b) to (1b*);
\draw[line width=0.5pt] (1b*) to (2b);
\draw[line width=0.5pt] (1) to (2a);
\draw[line width=0.5pt] (1) to (2b);
\draw[line width=0.5pt] (2a) to (3);
\draw[line width=0.5pt] (2b) to (3);
\draw[line width=0.5pt] (3) to (4);
\end{tikzpicture}
\end{center}


Subset relations between classes of decision problems and classes of promise problems are trivially strict when their definitions are taken seriously. For studying hierarchy relations in a reasonable way, even quantum complexity classes are sometimes considered as restricted to decision problems, e.g. when the question ``Is $\BQP$ equal to $\Pp$?'' is asked.

Reduction notions are a useful tool for a finer and also complexity class independent comparision of problems' complexity. 
A problem $A$ that can be reduced onto a problem $B$ is considered simpler as $B$ since it can be decided easily having knowledge about $B$.

\begin{definition}\label{def:karp}
A promise problem $A$ is \emph{Karp-} or \emph{$m$-reducible} to a promise problem $B$ (notation: $A\le_m^P B$) iff there exists a polynomial-time computable function $f:\Sigma^*\rightarrow\Sigma^*$ such that
\begin{align*}
x\in A_\text{yes} \Rightarrow f(x)\in B_\text{yes},\\
x\in A_\text{no} \Rightarrow f(x)\in B_\text{no}\text{.}
\end{align*}
\end{definition}

\begin{definition}\label{def:cook}
A promise problem $A$ is \emph{Cook-} or \emph{$T$-reducible} to a promise problem $B$ (notation: $A\le_T^P B$) 
iff $A$ can be decided by a DTM of polynomial runtime with oracle $B$ (the DTM has access to an oracle state that upon entering replaces every $x\square$, $x\in B_\text{yes}$, at the head position instantenously by $1$ and every $x\square$, $x\in B_\text{no}$, by $0$). 
\end{definition}

Notice, that an oracle is only allowed to be queried for elements of $B_\text{yes}$ and $B_\text{no}$. In the case of promise problems one has to ensure that the DTM does not query the oracle for any non-promised inputs of $B$. 

\begin{lem}\label{lem:KarpCook}
It holds $A\le_m^P B \Rightarrow A\le_T^P B$.
\end{lem}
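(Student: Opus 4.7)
The plan is to use the Karp reduction function $f$ directly as the single oracle query of a Cook reduction. Given $A \le_m^P B$ with witnessing polynomial-time computable $f:\Sigma^*\to\Sigma^*$, I would construct an oracle DTM $M^B$ that on input $x$ first writes $f(x)$ onto the tape, then invokes the oracle state on $f(x)\square$, and finally outputs whatever bit the oracle leaves at the head position. Since $f$ is computable in time polynomial in $|x|$, the whole procedure runs in polynomial time (one oracle access counts as a single step).

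The next step is verifying correctness on the promised inputs of $A$. If $x\in A_\text{yes}$, then by Definition~\ref{def:karp} $f(x)\in B_\text{yes}$, so the oracle returns $1$ and $M^B$ accepts; if $x\in A_\text{no}$, then $f(x)\in B_\text{no}$, the oracle returns $0$, and $M^B$ rejects. Thus $M^B$ decides $A$ in the sense of Definition~\ref{def:decidability}.

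The only subtlety, which is the sole non-routine point to address, is the promise-respecting requirement mentioned after Definition~\ref{def:cook}: the oracle may only be queried on strings in $B_\text{yes}\cup B_\text{no}$. This is immediate here because $M^B$ makes exactly one query, namely $f(x)$, and for every promised input $x\in A_\text{yes}\cup A_\text{no}$ the implications in Definition~\ref{def:karp} guarantee $f(x)\in B_\text{yes}\cup B_\text{no}$. Hence the reduction is well-defined as a Cook reduction, establishing $A\le_T^P B$.

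I do not foresee an obstacle; the content is essentially that a Karp reduction is the special case of a Cook reduction that makes one oracle call and copies the answer. The proof is a few lines and amounts to exhibiting the oracle machine and checking the three conditions (runtime, correctness, legality of queries).
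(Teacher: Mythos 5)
Your proposal is correct and follows the same route as the paper: compute $f(x)$, make a single oracle query on it, and return the answer, noting that the Karp-reduction conditions guarantee the query is always a promised instance of $B$. Your write-up merely spells out the promise-legality and runtime checks that the paper's one-line proof leaves implicit.
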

\begin{proof}
Let $f$ be the polynomial-time computable function that reduces $A$ to $B$. Then A can also be solved by a polynomial-time DTM first simulating the computation of $f$ and then querying the $B$-oracle on the function output. 
\end{proof}

Both introduced reduction notions (and any other reasonable reduction notion) form a pre-order obeying reflexivity and transitivity on the set of problems. For being a partial order the antisymmetric property is missing: Problems that can be reduced onto each other can still be different.

A special role is assumed by those problems that are the most difficult ones in a complexity class:

\begin{definition}
A promise problem $A$ is called $m$-\emph{hard} ($T$-\emph{hard}) for a complexity class $C$ iff all problems in $C$ can be Karp-reduced (Cook-reduced) to $A$. If $A$ is a problem of $C$ itself, $A$ is called $m$-\emph{complete} ($T$-\emph{complete}) for the complexity class $C$. 

We denote by $C\operatorname{-c_m}$ and $C\operatorname{-c_T}$, respectively, 
the set of all $m$- and $T$-complete problems for $C$.
\end{definition}



The stricter notion of Karp-reducibility is the standard reduction notion for complexity classes above $\Pp$, since $B\in C$ and $A\le_m^P B$ implies $A\in C$ for all complexity classes $C$ that can compute a polynomial-time computable function as subroutine (like all previously defined classes).
The representative nature of Karp-complete problems for complexity classes above $\Pp$ is the reason why Karp reductions are also simply called ``complete'' (without the addition ``$m$'' or ``Karp'') and those that scientists like to find for complexity classes.
For most of the introduced complexity classes complete problems are known: 

\renewcommand{\arraystretch}{1.3}
\begin{table}[h]
\begin{tabular}{l p{4.8cm} l}
Class & Complete problem &Ref.\\
\hline
$\Pp$ & all problems with at least \newline one yes- and no-instance &\\
$\NP$ & $k$-Satisfiability, $k\ge 3$ &\cite{cook}\\
$\BPP$ &? &\\
$\MA$ &? &\\
$\PromiseBPP\quad$ &Acceptance Ratio of DTMs$\qquad$ &\cite{goldreichPromise}\\
$\PromiseMA$ &Stoquastic $6$-Satisfiability &\cite{StoqMA}\\
$\BQP$ &Quadratically Signed \newline Weight Enumerator &\cite{knillBQP}\\
$\QCMA$ 
              &Ground State Connectivity &\cite{GSC}\\
$\QMA$ &$k$-Local Hamiltonian, $k\ge 2$ &\cite{NPsurvey,2LH}\\
\end{tabular}
\end{table}
\renewcommand{\arraystretch}{1}

The table just contains one examplary complete problem for each complexity class, 
though for most classes several complete problems are known. For $\QMA$ these are meanwhile several dozens; for $\NP$ several hundreds.
Remarkably, no complete (decision) problems are known for $\BPP$ and $\MA$. There seems to be a strong belief in the theoretical computer science community that $\BPP=\Pp$ and $\MA=\NP$ which may also be partly due to this fact. Moreover, we see here a good justifaction why rather $\PromiseBPP$ and $\PromiseMA$ should be considered as the proper complexity class definitions.

Let us close this section by mentioning that some promise problems $A$ are considered of such high logical or physical relevance that it is worth defining all problems reducible to $A$ as new complexity class with an own name. The complexity class $\TIM\subseteq\QMA$ is such an example \cite{LHclassification}, which consists of all problems that can be reduced onto a restricted Local Hamiltonian problem of transverse Ising model form. The notion of completeness hence allows an alternative, non-machine based approach to define complexity classes.

\section{A Framework for Promise Problems}

\subsection{Extremal Problems and Closure Properties of Complexity Classes}


\begin{definition}
A complexity class $C$ that for every promise problem $(A_\text{yes},A_\text{no})\in C$ also contains every \emph{subproblem} $(A'_\text{yes},A'_\text{no})$ with
\begin{align*}
A'_\text{yes}&\subseteq A_\text{yes}\\
A'_\text{no}&\subseteq A_\text{no}
\end{align*}
is called \emph{closed under promise restriction}. 
\end{definition}

According to definitions \ref{def:BPP} - \ref{def:QMA} the classes of $\mathcal{C}$ are closed under promise restriction. 
Since we like to uniquely identify only one problem with a machine like in the case of $\Pp$ and $\NP$ 
we introduce the notion of extremal problems:
\begin{definition}
The promise problem with the smallest promise decided by a DTM 
or a $C$-machine $M$, $C\in\mathcal{C}$, 
is called the \emph{extremal problem} of $M$ and denoted by $P(M)$.

Accordingly, we call the decision problem decided by a $\Pp$- or $\NP$-machine $M$ extremal and denote it by $P(M)$.  
\end{definition}


\begin{definition}
We denote by $C^*$, $C\in\mathcal{C}$ the restriction of the according complexity class to its extremal problems.
\end{definition}



Of course, the notion of extremal problems can easily be adapted to other ``machine-based'' complexity classes. 

One might wonder why $C^*$, $C\in\mathcal{C}$, is not used as the proper definition for the according randomized or quantum complexity class. 
Indeed, logically there is no reason to artificially demand a larger promise from a problem than necessary, but practically one usually starts by defining a logically or physically interesting problem and then aims at proving membership for this problem in a certain complexity class. These proofs often involve many implication arguments and approximations to finally show that an algorithm accepts with a sufficiently high or low probability.  But this does usually not rule out that the algorithm also accepts some other, non-promised instances with similar high or low probability. Even in the case of the $k$-Local Hamiltonian problem, in which the fundamental algorithm accepts with a probability that trivially relates to the promise \cite{NPsurvey}, the final amplification to achieve the standard completeness and soundness parameters 
 involves Chernoff's bound \cite{NPsurvey, kitaevBook}. Hence, even the $k$-LH problem is probably not an extremal problem according to its usual definition. But the advantage of its usual definition is that the promise on the ground energy is simply physically describable. 

We stress these subleties since structural results like the Uniform Diagonalization Theorem make actually statements about the structure of machine sets and hence about extremal classes. To formulate structural results correctly we use 
the above notations.




We will close this subsection by defining another closure property that even applies to all the complexity classes in this paper: the closure under finite variations. 
For two languages (or equivalently, decision problem) there exists a common definition of their \emph{symmetric difference} \cite{balcazar}. For promise problems we see two reasonable possibilities to extend this definition. We define closure under finite variations via the wider notion of the \emph{total symmetric difference} of two promise problems, since this is the notion that will become relevant in the proof of the Uniform Diagonalization Theorem.

\begin{definition}
For promise problems $A$ and $B$ we define
\begin{align*}
A \blacktriangle B &:= \{A_\text{yes}\cap B_\text{no}\} \cup \{A_\text{no}\cap B_\text{yes}\} &\text{(sym{.} diff{.})}\\
A \backslash B &:= \{A_\text{yes}\backslash B_\text{yes}\} \cup \{A_\text{no}\backslash B_\text{no}\}&\text{(difference)}\\
A\triangle B &:= (A \backslash B) \cup (B \backslash A) &\text{(total sym{.} diff{.})}
\end{align*}
We say ``$A$ equals $B$ \emph{almost everywhere (a.e.)}'' iff $A\triangle B$ is finite.
\end{definition}
Note, that the right side of the above difference expressions is always a set, despite the fact that the promise problems on the left side correspond to tuples of sets.

Obviously, for decision problems it holds
\begin{align*}
A\blacktriangle B = A\backslash B = B\backslash A = A\triangle B,
\end{align*}
while for general promise problems $A\backslash B \ne B\backslash A$ and the symmetric notion $A\blacktriangle B$ is only a subset of the symmetric notion $A\triangle B$:
\begin{center}
\begin{tikzpicture}[scale=0.9]
\node (1) at (0,0) [circle, minimum size=1.2cm]{$A \blacktriangle B$};
\node (2a) at (-1.5,1.5) [circle, minimum size=1.2cm] {$A\backslash B$};
\node (2b) at (1.5,1.5) [circle, minimum size=1.2cm] {$B \backslash A$};
\node (3) at (0,3) [circle, minimum size=1.2cm] {$A \triangle B$};
\draw[line width=0.5pt] (1) to (2a);
\draw[line width=0.5pt] (1) to (2b);
\draw[line width=0.5pt] (2a) to (3);
\draw[line width=0.5pt] (2b) to (3);
\end{tikzpicture}
\end{center}

\begin{definition}
A complexity class $C$ (of decision problems) is \emph{closed under finite variations (c.f.v.)} iff $A\in C$ implies $B\in C$ for every (decision) problem B that equals A almost everywhere.
\end{definition}

\subsection{Total Decidiability}

Many standard literature uses the notion of decidability only for decision problems (e.g. \cite{papadimitriou}). Those who use the notion in the context of promise problems (e.g. \cite{arora, goldreich}), agree 
on the one we gave in definition \ref{def:decidability}.
The disadvantage of this definition is the arbitrary behaviour of the machine on non-promised inputs. It is therefore reasonable to introduce a stricter version of decidability that we will call \emph{total decidability}:
\begin{definition}
A promise problem $A=(A_\text{yes},A_\text{no})$ is \emph{totally decidable} iff there exists a DTM $M$ such that
\begin{align*} 
\forall x&\in A_\text{yes}\; M(x)=1\\
\forall x&\in A_\text{no}\; M(x)=0\\
\forall x&\in \Sigma^*\backslash(A_\text{yes}\cup A_\text{no}) \; M(x)=10\text{.}
\end{align*}
\end{definition}

Obviously total decidability implies decidability and the two notions are identical in the case of decision problems. There are reasonable examples for both promise problems that are totally decidable and promise problems that are decidable but not totally decidable:

\begin{ex}\label{lem:PTM_total_decidability}
The promise problem $A=(A_\text{yes},A_\text{no})$ with
\begin{align*}
A_\text{yes} = \{ &\text{DTM with even runtime on its G{\"o}del number\}}\\
A_\text{no} = \{ &\text{DTM with odd runtime on its G{\"o}del number\}}
\end{align*}
is decidable but not totally decidable.
\end{ex}
\begin{proof}
Clearly, $A$ is decidable by simply counting the running time of the given machine. 

The set of non-promised inputs consists of exactly those DTMs that don't halt on their own G{\"o}del number as input. If $A$ was totally decidable, one could hence decide the Halting problem. 
\end{proof}

The purpose of the promise of the above problem is to avoid undecidable instances, which prevents the problem from being totally decidable. The original purpose of promise problems was however to exclude instances for that checking membership up to a certainty / accuracy required by the randomized or quantum complexity class under consideration would exceed the runtime restriction of that class. Take for example the $k$-local Hamiltonian problem: In the standard protocol \cite{NPsurvey} the promised gap on the ground energy of the Hamiltonian directly relates to the acceptance probability or -- if this is improved to obey the standard completeness and soundness parameter 
 -- to the runtime overhead caused by amplification.

Luckily we don't care about runtime when asking for total decidability, hence, the extremal problems of most standard complexity classes are totally decidable:

\begin{lem}\label{lem:probTotalDec}
The extremal problems of $\PromiseBPP$- and $\PromiseMA$-machines are totally decidable.
\end{lem}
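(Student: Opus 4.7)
The plan is to exploit the fact that, by the convention stated just before Definition \ref{def:BQP}, a $\PromiseBPP$- or $\PromiseMA$-machine $M$ obeys its polynomial runtime bound $p$ on \emph{every} input, not only on promised ones. This means for every $x \in \Sigma^*$ the computation tree of $M$ is finite, with at most $2^{p(|x|)}$ branches of length at most $p(|x|)$, so the acceptance probability $P_\text{acc}(x)$ (and $P_\text{acc}(x,y)$ in the $\PromiseMA$ case) is a dyadic rational that can be computed \emph{exactly} by a DTM via exhaustive branch enumeration.

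For $\PromiseBPP$, I would give the following DTM $N$: on input $x$, deterministically simulate $M$ on $x$ along every possible branch (using a counter that enumerates the sequence of random choices), tally the number of accepting branches $a$ and the total number of branches $b$, then compare $a/b$ to $2/3$ and $1/3$ by integer arithmetic. Output $1$ if $3a \ge 2b$, output $0$ if $3a \le b$, and otherwise output $10$. By construction $N$ accepts exactly $P(M)_\text{yes} = \{x : P_\text{acc}(x) \ge 2/3\}$, rejects exactly $P(M)_\text{no} = \{x : P_\text{acc}(x) \le 1/3\}$, and outputs $10$ on the complement, which is precisely the non-promised set of the extremal problem.

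For $\PromiseMA$, the same idea works but with an additional outer loop. Using that $m \in \poly$ is computable (so $m(|x|)$ can be written down), the DTM first computes $m(|x|)$, then iterates over all $y \in \Sigma^{m(|x|)}$, and for each such $y$ exactly evaluates $P_\text{acc}(x,y)$ as above. During this enumeration it maintains two flags: ``some $y$ satisfies $P_\text{acc}(x,y) \ge 2/3$'' and ``some $y$ satisfies $P_\text{acc}(x,y) > 1/3$''. After the loop it outputs $1$ if the first flag is set, $0$ if the second flag is unset (so all $y$ give $\le 1/3$), and $10$ otherwise. This precisely matches the partition of $\Sigma^*$ induced by the extremal problem $P(M)$ according to Definition \ref{def:MA}.

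The only point that requires care, and the closest thing to an obstacle, is making sure every comparison is performed exactly rather than by a potentially lossy numerical approximation; this is handled by working with the integer counts $a$ and $b$ directly, since $P_\text{acc}$ is always a rational of the form $a/2^{p(|x|)}$. Termination of $N$ is immediate from the finiteness of the branch set and, in the $\PromiseMA$ case, the finiteness of $\Sigma^{m(|x|)}$. Hence the extremal problems of both $\PromiseBPP$- and $\PromiseMA$-machines are totally decidable.
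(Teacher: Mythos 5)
Your proposal is correct and follows essentially the same route as the paper: simulate all branches of the (always-halting) PTM, compute the fraction of accepting branches exactly, and compare it to $\tfrac{2}{3}$ and $\tfrac{1}{3}$ by exact rational arithmetic, looping over all $2^{m(|x|)}$ witnesses in the $\PromiseMA$ case. The only cosmetic slip is calling $P_\text{acc}$ a dyadic rational of the form $a/2^{p(|x|)}$ -- with the paper's power-set transition function the branching factor need not be $2$ -- but your actual comparison of the integer counts $a$ and $b$ does not rely on this, so nothing breaks.
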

\begin{proof}
The extremal problem of a $\PromiseBPP$-machine can be totally decided by a DTM that simulates all branches of the PTM (which always halt per definition) and checks its fraction of accepting branches. In case of a $\PromiseMA$-machine this algorithm simply has to be repeated for each of the possible $2^m$ witnesses.
\end{proof}

Note, that for the correctness of the above lemma the concrete form of the runtime bound $r$ and the concrete values of the completeness and soundness parameters $c$ and $s$ are irrelevant. Since the acceptance probability of a halting PTM is always an exactly computable rational number, the statement holds accordingly for any complexity class that replaces $r$ and $c> s$ by other computable functions and in addition if one or both of the criteria $P_\text{acc}\ge c$ and $P_\text{acc}\le s$ are changed to strict inequalities.  The statement also remains valid for computable functions $c=s$ with maximally one of the criteria containing a strict inequality. 
These generalizations hold as well for the quantum analogue of the lemma, which we will prove next:

\begin{lem}\label{lem:circuitTotalDec}
The extremal problems of $\BQP$-, $\QCMA$- and $\QMA$-machines are totally decidable.
\end{lem}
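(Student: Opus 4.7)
The plan mirrors the preceding lemma: build a DTM that, on each computational input, exactly computes the relevant acceptance probability and compares it with $c=\tfrac{2}{3}$ and $s=\tfrac{1}{3}$ to output $1$, $0$, or $10$ as dictated by the definition of total decidability. The crucial enabling observation is that every entry of $\Hh$, $\T$, and $\CNOT$ lies in the algebraic ring $\mathbb{Z}[\tfrac{1}{\sqrt{2}},e^{i\pi/4}]$, whose elements admit a canonical finite representation supporting exact arithmetic and exact comparison with rationals. From the G\"odel number of the circuit family a DTM can therefore first simulate the constructor to obtain a gate list for $U_x$, and then assemble the entries of $U_x$ without any numerical approximation. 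Runtime is not at issue here, so even exponential-time subroutines are admissible.

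For $\BQP$, one evaluates the exact algebraic number $p(x)=\bra{0^k}U_x^\dagger\Pi_\text{acc} U_x\ket{0^k}$ and compares it to $c$ and $s$. For $\QCMA$, one additionally loops through all $2^{m(|x|)}$ classical witnesses $y\in\Sigma^{m(|x|)}$ and computes $p(x,y):=(\bra{y}\otimes\bra{0^k})U_x^\dagger\Pi_\text{acc} U_x(\ket{y}\otimes\ket{0^k})$ exactly for each: if some witness achieves $p(x,y)\ge c$, the DTM outputs $1$; if every witness satisfies $p(x,y)\le s$, it outputs $0$; otherwise it outputs $10$. This is the direct analogue of the proof of Lemma~\ref{lem:probTotalDec}.

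The main obstacle is $\QMA$, since the witness now ranges over the continuous space $\mathbb{C}^{2^{m(|x|)}}$ and we must decide whether the \emph{supremum} of the acceptance probability exceeds $c$ or is bounded by $s$. To reduce this to a finite computation I would introduce the Hermitian positive semidefinite operator
\begin{align*}
N_x:=(\I\otimes\bra{0^k})U_x^\dagger\Pi_\text{acc} U_x(\I\otimes\ket{0^k})
\end{align*}
acting on the witness register, and verify by a direct calculation that the maximum over normalized $\ket{\psi}$ of $(\bra{\psi}\otimes\bra{0^k})U_x^\dagger\Pi_\text{acc} U_x(\ket{\psi}\otimes\ket{0^k})$ coincides with the largest eigenvalue $\lambda_{\max}(N_x)$. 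Since $N_x$ has entries in the algebraic ring above and is Hermitian, its characteristic polynomial has real algebraic coefficients that a DTM can compute exactly from $U_x$. Standard root-isolation procedures for real algebraic numbers (for example Sturm sequences) then decide whether $\lambda_{\max}(N_x)\ge\tfrac{2}{3}$, whether $\lambda_{\max}(N_x)\le\tfrac{1}{3}$, or whether $\tfrac{1}{3}<\lambda_{\max}(N_x)<\tfrac{2}{3}$, and the DTM outputs $1$, $0$, or $10$ accordingly. As with the probabilistic lemma, the argument goes through for any computable completeness and soundness parameters (and either strict or non-strict inequalities), so only the algorithmic exactness of the entries, not the particular numeric values of $c$ and $s$, is essential.
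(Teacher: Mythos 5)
Your proposal is correct, and for the $\BQP$ and $\QCMA$ cases it coincides with the paper's argument: simulate the constructing DTM, do exact arithmetic in the extension generated by $\tfrac{1}{\sqrt 2}$ and $i$ (the paper works in the $4$-dimensional $\mathbb{Q}$-vector space with basis $1,\tfrac{1}{\sqrt 2},i,\tfrac{i}{\sqrt 2}$), compare the resulting acceptance probability exactly against $c$ and $s$, and for $\QCMA$ loop over all $2^m$ witnesses. Where you genuinely diverge is the $\QMA$ case. The paper also reduces to the largest eigenvalue of the reduced operator $Q$ on the witness register (your $N_x$), but instead of touching the characteristic polynomial it tests positive semi-definiteness of $s\I-Q$ and positive definiteness of $c\I-Q$ via Sylvester's criterion, so that everything comes down to exact sign determination of finitely many minors in $\mathbb{Q}(\sqrt 2)$. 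You instead compute the characteristic polynomial of $N_x$ (whose coefficients indeed lie in $\mathbb{Q}(\sqrt 2)$ since $N_x$ is Hermitian) and locate $\lambda_{\max}$ relative to $c$ and $s$ by Sturm sequences; since all eigenvalues lie in $[0,1]$, counting roots in $(c,2]$ and $(s,2]$ together with exact evaluation at the rational points $c$ and $s$ settles the trichotomy. Both routes bottom out in the same primitive, namely deciding the sign of an element $a+b\sqrt 2$ exactly. Your version is slightly heavier machinery but is the more standard computer-algebra argument, and it has the incidental advantage of sidestepping a delicate point in the determinant route: positive semi-definiteness requires the non-negativity of \emph{all} principal minors, not merely the leading ones (consider $\diag(0,-1)$), a distinction the minor-based test must get right for the soundness check $s\I-Q\succeq 0$. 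One small presentational gap: you say exact comparison with rationals is "supported" by the canonical representation; for completeness you should note explicitly how the sign of $a+b\sqrt 2$ with $a,b\in\mathbb{Q}$ is decided (by comparing $a^2$ with $2b^2$ after a case analysis on signs, or, as the paper does, by converging rational bounds on $\sqrt 2$ once equality with $c$ or $s$ has been excluded).
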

\begin{proof}
By simulating the circuit-constructing DTM we obtain the G{\"o}del number of a quantum circuit with gate series $U_x$ on $k$ plus 
possibly $m$ witness qubits. 
We have to differentiate if the acceptance probability $P_\text{acc}$ is $\le s$, $\in\,]{s,c}[$ or $\ge c$ with $s=\frac{1}{3}$, $c=\frac{2}{3}$ and
\begin{align*}
P_\text{acc}&:=\bra{0^k}U_x^\dagger\Pi_\text{acc} U_x \ket{0^k}
\end{align*}
in the case of $\BQP$,
\begin{align*}
P_\text{acc}&:=\max_{y\in\{0,1\}^m}(\bra{0^k}\otimes\bra{y}) U_x^\dagger\Pi_\text{acc} U_x (\ket{0^k}\otimes\ket{y})
\end{align*}
in the case of $\QCMA$ and
\begin{align*}
P_\text{acc}&:=\text{highest eigenvalue of }Q:=\bra{0^k}U_x^\dagger\Pi_\text{acc} U_x \ket{0^k}
\end{align*}
in the case of $\QMA$.

We recall that the generated quantum circuit just consists of H-, T- and CNOT-gates. In case of $\BQP$ the acceptance probability $P_\text{acc}$ equals hence a sum of products of elements from the field
\begin{align*}
\mathbb{Q}(\frac{1}{\sqrt{2}}, i)
\end{align*}
(notice that the phase $e^{i\pi/4}$ of the T-gate can be written as $\frac{1}{\sqrt{2}}(1+i)$).

This finite field extension can be handled as 4-dimensional vector space $V$ over the rational numbers with the abstract basis vectors 
\begin{align*}
1,\; \frac{1}{\sqrt{2}},\; i,\; \frac{i}{\sqrt{2}}\text{.}
\end{align*}
A DTM can compute operations on the coefficients exactly by storing two integers and a sign for each rational number and it can carry out the finitely many different products of basis vectors abstractly. There is consequently a DTM that can compute $P_\text{acc}$ as a rational linear combination of the abstract vectors $1$ and $\frac{1}{\sqrt{2}}$ (the imaginary vector obviously vanishes for the acceptance probability). 

If the coefficient of the $\frac{1}{\sqrt{2}}$-vector vanishes and the coefficient of the $1$-vector equals $c$ or $s$, the DTM can accept or reject directly. 
Otherwise, it is clear that $P_\text{acc}$ unequals $c$ and $s$. 
Since there exists methods (e.g. the babylonian / Heron method) to compute monotonous sequences of converging upper and lower bounds of square roots, the DTM can compute improving lower and upper bounds on $P_\text{acc}$ until their exceeding of or falling below $c$ or $s$ discloses 
if $P_\text{acc}<s$, $P_\text{acc}>c$ or $P_\text{acc} \in\, ]{s,c}[$.

To totally decide the extremal problem of a $\QCMA$-circuit family the above algorithm simply has to be run for all possible witnesses $y\in\{0,1\}^m$.

In case of a $\QMA$-problem the following equivalences hold:
\begin{align*}
s\I - Q \text{ positive semi-definite } &\Longleftrightarrow P_\text{acc}\le s\\
c\I - Q \text{ not positive definite } &\Longleftrightarrow P_\text{acc}\ge c
\end{align*}
and consequently $P_\text{acc}\in {]s,c[}$ if none of the two conditions holds.
Consequently, we have to argue that the positive semi-definiteness of $s\I-Q$ and the positive definiteness of $c\I-Q$ are decidable. Sylvester's criterion states the positive definiteness of a matrix is equivalent to the positivity of all its principal minors and positive semi-definiteness to the non-negativity of all its leading principal minors. It is simple to decide the positivity and non-negativity of minors (determinants of submatrices) by computing improving bounds 
in the vector space $W$ as described above.
\end{proof}


As we discussed before, the above two lemmata hold accordingly for randomized and quantum complexity classes with different runtime bounds and completeness and soundness parameters $c$ and $s$ as long as they are computable functions. In case of quantum complexity classes one might also ask if the last lemma still holds under the assumption of a different gate set providing the basic elements of a quantum circuit. From the proof we can see that the answer is obviously yes as long as the matrix entries of these gates are from a field extension of $\mathbb{Q}$ with countable degree such that the product of any two basis vectors is computable abstractly and a converging lower and upper bound for each basis vector is computable. 

An example for such an infinite field extension is the field of all algebraic numbers $\mathbb{A}$ which is made up by all roots of polynomials over $\mathbb{Q}$.
Instead of writing elements of $\mathbb{A}$ as linear combinations of abstract basis vectors, one can also identify each element directly by a unique rational polynomial and a sufficiently small isolating rectangle in the complex plane. 
The author of \cite{strzebonski} shows 
that summation, substraction, multiplication, division and powering in this representation only needs rational computations (including the computation of resultants of polynomials which is a new polynomial whose coefficients are an expression in the coefficients of the original ones) and can hence be done exactly. 
The final isolating rectangle can be narrowed enough
to allow the approximation of the final number to arbitrary precision, while the exact root candidates $c$ and $s$ can be ruled out as before by simple insertion.

One case in which a complexity class crucially changes with the form of the gate set is a one-sided complexity class such as $\QMA_1$ which equals $\QMA$ with $c=1$. 
To achieve this perfect completeness a $\QMA_1$-circuit family is usually allowed to contain any matrix element from the field the problem is formulated in. The authors of \cite{QMA1} advocate the algebraic numbers as largest reasonable field to define $\QMA_1$. If one 
naively thought to allow any field that can be ``described by words'', this would instead lead to an easy construction of circuits whose extremal problem is not decidable at all. Consider e.g. a field containing a Chaitin's number whose $i$'th digit equals $1$ if $i$ is a yes-instance of the Halting problem and $0$ otherwise. 

\vspace{1em}


That total decidability still holds for extremal problems of $\QMA$-like machines with gates over algebraic numbers, also implies the total decidability for $\QMA$-variants where the witness in the \emph{completeness case} (input is a yes-instance) passes a natural quantum channel.
These noisy $\QMA$-classes, denoted by $\QMA_\mathcal{E}$ with $\mathcal{E}=(\mathcal{E}_m)_{m\in\mathbb{N}}$ the quantum channel family, have the advantange that they describe a large variety of classes below $\QMA$ including $\QCMA$ (choosing $\mathcal{E}_m$ as fully dephasing channel) and $\BQP$ (choosing $\mathcal{E}_m$ as fully depolarizing channel).
In \cite{noisyQMA} these classes were introduced for independent and identical qubit noise. But this restriction is not even necessary for the total decidability of the extremal problems; the only necessary condition is the computability of the Stinespring dilation (a representation of the channels by ancilla qubits and a unitary operation) in the abstract representation of algebraic numbers:


\begin{lem}
The extremal problem of a noisy $\QMA_\mathcal{E}$-machine is totally decidable if $\mathcal{E}=(\mathcal{E}_m)_{m\in\mathbb{N}}$ is a quantum channel family whose Stinespring dilation over the field of algebraic numbers is computable.
\end{lem}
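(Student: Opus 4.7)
The plan is to reduce the problem to an instance of Lemma~\ref{lem:circuitTotalDec} by absorbing the noise into the circuit. First I would use the hypothesis to compute, for each witness length $m$, the Stinespring dilation of $\mathcal{E}_m$: a unitary $V_m$ acting on the $m$ witness qubits together with some number $a(m)$ of ancilla qubits initialised in $\ket{0^{a(m)}}$, such that $\mathcal{E}_m(\rho)=\tr_A[V_m(\rho\otimes\ket{0^{a(m)}}\bra{0^{a(m)}})V_m^\dagger]$, with matrix entries of $V_m$ accessible in the abstract algebraic-number representation discussed before the lemma.

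The key observation is then that tracing out the ancilla register $A$ commutes with the final measurement of the designated output qubit, since $\Pi_\text{acc}$ acts trivially on $A$. Consequently, prepending the dilation $V_m\otimes\I$ to the circuit $\I\otimes U_x$ produces an enlarged unitary $W_x$ on $m+a(m)+k$ qubits whose acceptance probability on input $\ket{\psi}\otimes\ket{0^{a(m)}}\otimes\ket{0^k}$ equals the noisy acceptance probability on witness $\ket{\psi}$; taking the maximum over pure $\ket{\psi}$ (which by linearity coincides with the maximum over mixed witnesses) then reproduces the extremal $\QMA_\mathcal{E}$ value. In short, the noisy $\QMA_\mathcal{E}$-machine has the same extremal problem as a standard $\QMA$-machine built from the circuit family $(W_x)_{x\in\Sigma^*}$, in which the $a(m)$ additional qubits are treated as work qubits fixed to $\ket{0}$.

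At that point I would invoke the generalization of Lemma~\ref{lem:circuitTotalDec} noted in the remark after its proof, which extends total decidability from circuits over $\{\Hh,\T,\CNOT\}$ to arbitrary circuits whose matrix entries lie in a computable algebraic extension of $\mathbb{Q}$. Since the entries of $U_x$ already lie in $\mathbb{Q}(1/\sqrt{2},i)\subset\mathbb{A}$ and those of $V_m$ are by hypothesis computable in $\mathbb{A}$, the entries of $W_x$ can be computed exactly in the representation of \cite{strzebonski}, and Sylvester's criterion applied to $s\I-Q$ and $c\I-Q$ distinguishes in which of the three intervals $[0,s]$, $\,]s,c[$, $[c,1]$ the acceptance probability lies, where $Q$ denotes the witness-space operator associated with $W_x$.

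The step I expect to be most delicate is not any single calculation but the consistent book-keeping that combining the algebraic-number entries of $V_m$ with those of $U_x$ still yields effectively computable field elements: all arithmetic, resultant, and isolating-rectangle operations of the algebraic-number representation must carry over to the enlarged circuit. This is ultimately why the hypothesis is phrased over all of $\mathbb{A}$ rather than over a smaller subfield; since $\mathbb{A}$ is itself closed under the operations needed in the proof of Lemma~\ref{lem:circuitTotalDec}, no new algorithmic machinery is required once the dilation is in hand.
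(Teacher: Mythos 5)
There is a genuine gap, and it lies in the central structural step rather than in the algebraic-number bookkeeping (which is fine and matches the paper). You claim that ``the noisy $\QMA_\mathcal{E}$-machine has the same extremal problem as a standard $\QMA$-machine built from the circuit family $(W_x)$'' obtained by absorbing the Stinespring dilation $V_m$ into the circuit. But in the definition used in the paper the channel acts on the witness \emph{only in the completeness case}: a yes-instance of the extremal problem is an $x$ for which some witness passed through $\mathcal{E}_m$ is accepted with probability $\ge\frac{2}{3}$, while a no-instance is an $x$ for which \emph{every} witness, \emph{without} the channel, is accepted with probability $\le\frac{1}{3}$. These two conditions refer to two different witness-space operators, namely
\begin{align*}
Q_1&:=(\bra{0^k}\otimes\bra{0^l})V_m^\dagger U_x^\dagger \Pi_\text{acc} U_x V_m(\ket{0^k}\otimes\ket{0^l}) \quad\text{and}\quad Q_2:=\bra{0^k}U_x^\dagger \Pi_\text{acc} U_x\ket{0^k}\text{,}
\end{align*}
and since $\mathcal{E}_m(\rho)$ ranges only over a subset of all witness states one has $\lambda_{\max}(Q_1)\le\lambda_{\max}(Q_2)$ with strict inequality possible. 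Your single-circuit reduction tests both thresholds against $Q_1$, so on any input with $\lambda_{\max}(Q_1)\le\frac{1}{3}<\lambda_{\max}(Q_2)$ it outputs ``no'' although that input is a non-promised instance of the extremal problem and must receive the output $10$. Hence your DTM decides the extremal problem but does not \emph{totally} decide it, which is exactly the property the lemma asserts.

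The repair is small and brings you back to the paper's argument: run the eigenvalue test twice, once checking $\lambda_{\max}(Q_1)\ge\frac{2}{3}$ (yes-condition, with the dilation prepended) and once checking $\lambda_{\max}(Q_2)\le\frac{1}{3}$ (no-condition, without it), and output $1$, $0$ or $10$ according to which of the two checks succeed. Everything else in your proposal --- computing $V_m$ in the abstract representation of algebraic numbers, the equality of the maximum over pure and mixed witnesses, and the use of Sylvester's criterion with converging rational bounds as in the generalization of Lemma~\ref{lem:circuitTotalDec} --- carries over verbatim to each of the two operators separately.
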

\begin{proof}
Given a $\QMA_\mathcal{E}$-machine outputting a quantum circuit with gate series $U_x$ on $k$ plus $m$ witness qubits, its extremal problem can be totally decided by DTM similar to the one defined in the proof of lemma \ref{lem:circuitTotalDec}. After the computation of the Stinespring dilation with $l$ ancilla qubits and unitary $V_m$ of the channel $\mathcal{E}_m$ defined by the witness size $m$ the DTM has to check if the highest eigenvalue of 
\begin{align*}
(\bra{0^k}\otimes \bra{0^l}) V_m^\dagger U_x^\dagger\Pi_\text{acc} U_x V_m (\ket{0^k}\otimes \ket{0^l})
\end{align*}
is $\ge \frac{2}{3}$ and if the highest eigenvalue of
\begin{align*}
\bra{0^k} U_x^\dagger\Pi_\text{acc} U_x \ket{0^k}
\end{align*}
is $\le \frac{1}{3}$.

Both these questions can be answered deterministically by computing in the field of algebraic numbers as described before (the initial channel unitary is nothing else then an additional gate).
If the first question is answered with ``yes'', then the input is a yes-instance. If the first question is answered with ``no'' but the second with ``yes'', the input is a no-instance. If both questions are answered with ``no'' the input is a non-promised instance.
\end{proof}

\subsection{Recursive (Re)presentation} 
\label{sec:rr}

The Uniform Diagonalization Theorem will apply to those complexity classes for that the extremal problems are totally decidable and the corresponding machines enummerable. 
These two properties together define a class as \emph{recursively (re)presentable}, which we introduce below as an extension of the well-known notion for decision problems \cite{balcazar, schoening}: 

\begin{definition}
A complexity class $C$ of promise problems is \emph{recursively presentable}, iff there exists a computable series $M_0,M_1,M_2\dots$ of halting DTMs 
such that $C$ contains exactly those promise problems $A=(A_\text{yes},A_\text{no})$ for that there exists an $i\in\mathbb{N}_0$ such that
\begin{align*}
\forall x\in A_\text{yes}:\; &M_i(x)=1,\\
\forall x\in A_\text{no}:\; &M_i(x)=0,\\
\forall x\notin A_\text{yes} \cup A_\text{no}:\; &M_i(x)=10\text{.}
\end{align*}

We call a class $C'$ \emph{recursively representable} iff it equals the closure of a recursively presentable class $C$ under promise restriction.
\end{definition}
Computability of the series $M_0,M_1,M_2,\dots$ means of course the computability of the function $i \rightarrow M_i$. Expressing it as a series just 
reflects better the enummerability property.

The property of recursive representability can obviously only be held by complexity classes of promise problems, since classes of decision problems are not  closed under promise restriction by definition. Reversely, recursive presentability can not only apply to classes of decision problems but also to classes of promise problems, especially to those that are restricted to  
 extremal problems.

\begin{lem}\label{lem:rp}
The complexity classes $\Pp$, $\NP$ and any $C^*$ with $C\in\mathcal{C}$ are recursively presentable.
\end{lem}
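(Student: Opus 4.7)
The plan is to describe, for each class in the statement, an effective enumeration of halting DTMs whose extremal problems are exactly the members of that class. The template is identical in every case: enumerate strings that encode the full specification of a defining machine (the machine itself together with all resource bounds), and uniformly transform each such specification into a halting DTM that decides, or totally decides, the corresponding extremal problem.

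For $\Pp$ the enumeration runs over all pairs $(M,p)$ where $M$ is a G\"odel number of a DTM and $p$ encodes a polynomial. The associated $M_i$ on input $x$ first computes $p(|x|)$, then simulates $M$ on $x$ for at most that many steps using an explicit step counter, and outputs $1$ if the simulation halts with output $1$ in time and $0$ otherwise. Each $M_i$ halts on every input, its extremal problem lies in $\Pp$ by construction, and every problem in $\Pp$ is captured since any $\Pp$-machine is described by some pair $(M,p)$. The enumeration for $\NP$ is analogous using triples $(M,p,q)$ with an additional polynomial $q$ for the witness length: $M_i$ on $x$ loops over all $y\in\Sigma^{q(|x|)}$, runs the truncated simulation of $M(x,y)$, and accepts iff some branch does.

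For $C^*$ with $C\in\mathcal{C}$ the enumeration indexes every string $g\in\Sigma^*$, interpreting $g$ as the G\"odel number of a $C$-machine (with the trivial default machine whenever $g$ is not a valid encoding). The DTM $M_g$ is then obtained by applying the construction in the proof of Lemma \ref{lem:probTotalDec} for $\PromiseBPP$ and $\PromiseMA$, or of Lemma \ref{lem:circuitTotalDec} for $\BQP$, $\QCMA$ and $\QMA$: simulate the PTM branches or construct the quantum circuit, carry out the exact arithmetic on the acceptance probability in $\mathbb{Q}(\tfrac{1}{\sqrt{2}},i)$ or, in the $\QMA$ case, in the algebraic numbers, and emit $1$, $0$ or $10$ according to whether $P_\text{acc}\ge c$, $P_\text{acc}\le s$, or $P_\text{acc}\in\,]{s,c}[$. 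By the definition of extremal problem, the extremal problem of the underlying $C$-machine equals the promise problem totally decided by $M_g$, so the family $(M_g)_{g\in\Sigma^*}$, re-indexed by a standard bijection with $\mathbb{N}_0$, presents exactly $C^*$.

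The only point requiring care is that the map $i\mapsto M_i$ itself be computable in each case. This reduces to checking that the constructions in Lemmas \ref{lem:probTotalDec} and \ref{lem:circuitTotalDec} are uniform in the G\"odel number of the $C$-machine --- they invoke the machine as a black box and then run a fixed arithmetic procedure --- and that the step-counted simulation and witness-enumeration used for $\Pp$ and $\NP$ are the standard uniform constructions. I expect this uniformity verification to be the main but essentially mechanical obstacle; once it is in place, the definition of recursive presentability is matched directly.
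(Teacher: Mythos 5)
Your proposal is correct and follows essentially the same route as the paper: enumerate tuples consisting of a G\"odel number together with explicit polynomial resource bounds, truncate the simulation to those bounds, and for the classes in $\mathcal{C}$ compose with the total-deciding DTMs of Lemmas \ref{lem:probTotalDec} and \ref{lem:circuitTotalDec}. The one point to state carefully in the $C^*$ case is that the polynomial runtime bound must be part of the index (as your opening template and your $\Pp$ treatment already indicate) rather than something ``recognized'' from the G\"odel number alone, since whether a given machine runs in polynomial time is undecidable.
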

\begin{proof}
All deterministic Turing machines can be simulated 
given their G{\"o}del number and
all polynomials over $\mathbb{N}_0$ form 
 a computable series $(p_i)_{i\in\mathbb{N}_0}$ since polynomials whose coefficients add up to the same sum form a finite set and these are obviously enummerable.
Hence, one can define a computable series $(M_i)_{i\in\mathbb{N}_0}$ for all $\Pp$-machines 
with $M_i$, $i$ interpreted as pair 
$(j,k)$, the $\Pp$-machine simulating the DTM with G{\"o}del number $j$ up to runtime $p_k$ (with returning a default value, if the DTM doesn't halt on $0$ or $1$ in time $p_k$). 

Note that by dropping or adapting the restriction on the output values we can as well obtain a computable series for all polynomial-time computable functions $\Sigma^*\rightarrow\Sigma^*$. 
If we consider the index $i$ even as tuple $(j,k,l)$ and simulate the DTM with G{\"o}del number $j$ up to time $p_k$ and output the minimum 
 of the obtained output and $p_l$, we even construct a computable series $(f_i)_{i\in\mathbb{N}_0}$ for all functions in the set $\poly$.

A computable series $(M_i)_{i\in\mathbb{N}_0}$ of all DTMs that decide $\NP$-problems is realized by defining $M_i$, $i$ interpreted as tuple $(j,k,l)$, as the DTM that checks the acceptance probability of the DTM with G{\"o}del number $j$ limited to time $p_k$ for each witness of length $f_l$.

\vspace{1em}

By interpreting G{\"odel} numbers as encodings of probabilistic or quantum circuit generating Turing machines we can construct in a similar way computable series of all $C$-machines, $C\in\mathcal{C}$. Since the extremal problems of these machines are totally decidable according to lemmata \ref{lem:probTotalDec} and \ref{lem:circuitTotalDec} the machines in these series can be replaced by the DTMs that totally decide their extremal problems to obtain a recursive presentation of the complexity class.
\end{proof}

\begin{cor}
The complexity classes of $\mathcal{C}$ are recursively representable.
\end{cor}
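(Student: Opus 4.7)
The plan is to combine the two main ingredients already established: that $C^{*}$ is recursively presentable for each $C\in\mathcal{C}$ (Lemma~\ref{lem:rp}) and that $C$ itself is closed under promise restriction (noted immediately after the definition of closure under promise restriction, which follows directly from Definitions~\ref{def:BPP}--\ref{def:QMA}). By the definition of recursive representability, it suffices to exhibit, for each $C\in\mathcal{C}$, a recursively presentable class whose closure under promise restriction equals $C$. The natural candidate is $C^{*}$ itself.

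Concretely, I would fix $C\in\mathcal{C}$ and argue both inclusions. For ``$\supseteq$'', every problem in $C^{*}$ is in $C$ by the very definition of $C^{*}$ as the restriction of $C$ to extremal problems, and since $C$ is closed under promise restriction it also contains every subproblem of a $C^{*}$-problem; hence the closure of $C^{*}$ under promise restriction sits inside $C$. For ``$\subseteq$'', take any $A\in C$ and let $M$ be a $C$-machine witnessing this. The extremal problem $P(M)$ lies in $C^{*}$ by definition, and $A$ is a subproblem of $P(M)$: indeed, both $A_{\text{yes}}\subseteq P(M)_{\text{yes}}$ and $A_{\text{no}}\subseteq P(M)_{\text{no}}$ hold because $M$ accepts on $A_{\text{yes}}$ and rejects on $A_{\text{no}}$, while $P(M)$ is defined as the smallest-promise problem decided by $M$ (so its yes-/no-sets are exactly the inputs on which $M$ accepts resp.\ rejects with the required gap). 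Therefore $A$ lies in the closure of $C^{*}$ under promise restriction.

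Combining the two inclusions yields $C$ as the closure of the recursively presentable class $C^{*}$ under promise restriction, which is precisely the condition for $C$ to be recursively representable. Running this argument uniformly for each $C\in\mathcal{C}$ finishes the proof.

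I do not anticipate any real obstacle: the work has already been done in Lemma~\ref{lem:rp}. The only point that requires a moment of care is verifying that $P(M)$ genuinely has $A$ as a subproblem for every $A$ decided by $M$; this relies on the definition of $P(M)$ as the \emph{smallest-promise} problem decided by $M$, which makes its yes-/no-sets maximal among all problems $M$ decides, so any $A$ that $M$ decides must fit inside it. Everything else is bookkeeping.
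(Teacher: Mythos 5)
Your proposal is correct and is exactly the argument the paper intends: the corollary is stated without proof as an immediate consequence of Lemma~\ref{lem:rp} together with the definition of recursive representability, and your two inclusions (closure of $C^{*}$ under promise restriction lands in $C$ because $C$ is closed under promise restriction; every $A\in C$ is a subproblem of the extremal problem $P(M)$ of its witnessing machine) are precisely the bookkeeping the author leaves implicit. No gap.
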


\begin{cor}
For quantum channel families $\mathcal{E}$ whose Stinespring dilation over the field of algebraic numbers is computable, the complexity class $\QMA_\mathcal{E}^*$ is recursively presentable and $\QMA_\mathcal{E}$ hence recursively representable.
\end{cor}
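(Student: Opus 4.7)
The plan is to mirror the proof of Lemma \ref{lem:rp} almost verbatim, exploiting the previous lemma as the crucial input. The two ingredients a recursive presentation needs are (i) a computable enumeration of the machines defining the class and (ii) a way to replace each such machine by a halting DTM that totally decides the machine's extremal problem. Ingredient (ii) is exactly what the preceding lemma provides under the hypothesis on $\mathcal{E}$.

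For ingredient (i), I would first build a computable series $(N_i)_{i\in\mathbb{N}_0}$ of all $\QMA_\mathcal{E}$-machines. As in the proof of Lemma \ref{lem:rp}, interpret the index $i$ as a tuple $(j,k,l)$ and let $N_i$ be the DTM that, given input $x$, simulates the DTM with G\"odel number $j$ for at most $p_k(|x|)$ steps, takes its output as a circuit encoding on $k+f_l(|x|)$ qubits (defaulting to the trivial always-rejecting circuit if the output is malformed or the simulation does not halt in time), and attaches to the witness register the channel $\mathcal{E}_{f_l(|x|)}$. This enumerates, up to duplicates and trivializations, every $\QMA_\mathcal{E}$-machine. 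The enumeration is computable because the polynomials and poly-functions are computably enumerable (as shown in Lemma \ref{lem:rp}) and the channel family $\mathcal{E}$ enters only through its Stinespring dilation, which by hypothesis is uniformly computable.

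Next I would apply ingredient (ii): by the previous lemma, for every $N_i$ there is a DTM $M_i$ that totally decides $P(N_i)$, and the map $N_i \mapsto M_i$ is itself computable because the proof of that lemma is uniform in the circuit-generating DTM (it consists of simulating $N_i$, computing the Stinespring dilation, and then running the algebraic-number decision procedure of Lemma \ref{lem:circuitTotalDec} on the extended circuit). Composing the two computable maps yields a computable series $(M_i)_{i\in\mathbb{N}_0}$ of halting DTMs such that $\{P(N_i)\}_i = \QMA_\mathcal{E}^*$ and each $M_i$ totally decides $P(N_i)$ in the sense required by the definition of recursive presentability. This establishes that $\QMA_\mathcal{E}^*$ is recursively presentable. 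Since $\QMA_\mathcal{E}$ is by its definition the closure of $\QMA_\mathcal{E}^*$ under promise restriction, the second clause follows immediately from the definition of recursive representability.

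I do not expect a genuine obstacle: both ingredients have already been assembled in the paper. The only thing to be a little careful about is the uniformity of the Stinespring dilation in $m$, i.e. that the hypothesis ``computable Stinespring dilation'' is phrased as a single DTM taking $m$ as input rather than a separate DTM for each $m$; this is the natural reading of the preceding lemma's hypothesis and is precisely what is needed to make the enumeration step computable rather than merely pointwise.
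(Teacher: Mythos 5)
Your proposal is correct and is essentially the argument the paper intends: the corollary is stated without a separate proof precisely because it follows from the machine-enumeration construction in Lemma \ref{lem:rp} combined with the total decidability of $\QMA_\mathcal{E}$ extremal problems from the preceding lemma, which is exactly the two-ingredient decomposition you give. Your added remark on reading the hypothesis as uniform computability of the Stinespring dilation in $m$ is the right (and intended) reading and is needed for the enumeration to be computable.
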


Note, that we do not know how to recursively present $\BPP$ or $\MA$. The problem of the straight forward method is that we don't know how to decide if the extremal problem of a polynomial-time PTM is a decision problem, i.e. if the machine accepts with probability $\ge \frac{2}{3}$ or $\le\frac{1}{3}$ on all inputs. This relates to the missing knowledge of complete problems for $\BPP$ and $\MA$, since a complete problem would provide another possibility to prove recursive presentation, namely by enummeration of all reduction functions: 

\begin{lem}
For a totally decidable problem $A$ the set 
\begin{align*}
A^{\ge_{m}^P} :=\{\text{promise problem }B \,|\, B\le_m^P A\}
\end{align*}
is recursively representable. 

If $A$ is a decision problem, then the series of DTMs that recursively represent $A^{\ge_{m}^P}$ recursively presents all decision problems that can be $m$-reduced to $A$.
\end{lem}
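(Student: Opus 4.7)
The plan is to build the presenting series from two ingredients the excerpt already supplies: the enumeration of all polynomial-time computable functions $f:\Sigma^*\to\Sigma^*$ constructed inside the proof of Lemma \ref{lem:rp}, and a single DTM $M_A$ that totally decides $A$ (which exists by assumption).

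First I would fix a computable series $(f_i)_{i\in\mathbb{N}_0}$ of all polynomial-time computable functions and associate to each $f_i$ the ``maximal'' promise problem reducible to $A$ via $f_i$, namely $B_i=(B_{i,\text{yes}},B_{i,\text{no}})$ with $B_{i,\text{yes}}=f_i^{-1}(A_\text{yes})$ and $B_{i,\text{no}}=f_i^{-1}(A_\text{no})$. By construction $f_i$ is a Karp reduction from $B_i$ to $A$, and any promise problem reducing to $A$ via $f_i$ is a subproblem of $B_i$.

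Next I would define $M_i$ as the halting DTM that on input $x$ first computes $f_i(x)$ and then runs $M_A$ on $f_i(x)$. The series $(M_i)$ is computable because $(f_i)$ is, and total decidability of $A$ immediately forces $M_i(x)=1$ for $x\in B_{i,\text{yes}}$, $M_i(x)=0$ for $x\in B_{i,\text{no}}$, and $M_i(x)=10$ otherwise. Thus $(M_i)$ recursively presents the class $C:=\{B_i\mid i\in\mathbb{N}_0\}$.

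To finish the first statement I would verify that $A^{\ge_m^P}$ equals the closure of $C$ under promise restriction: any $B\in A^{\ge_m^P}$ reduces to $A$ via some $f_i$, whence $B_\text{yes}\subseteq B_{i,\text{yes}}$ and $B_\text{no}\subseteq B_{i,\text{no}}$, displaying $B$ as a subproblem of $B_i\in C$; conversely every subproblem of $B_i$ inherits the reduction $f_i$ and lies in $A^{\ge_m^P}$. For the addendum, if $A$ is a decision problem then $f_i^{-1}(A_\text{yes}\cup A_\text{no})=\Sigma^*$, so each $B_i$ is automatically a decision problem, and any decision problem $B\le_m^P A$ via $f_i$ turns the two inclusions above into equalities and therefore coincides with $B_i$; so the same series $(M_i)$ recursively presents exactly the decision problems $m$-reducible to $A$, with no promise restriction needed. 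The only mildly delicate point I foresee is making sure $(f_i)$ genuinely enumerates halting poly-time functions rather than arbitrary DTMs that might overrun their polynomial budget, but this is precisely handled by the truncation trick spelled out in the proof of Lemma \ref{lem:rp}, so no new work is required.
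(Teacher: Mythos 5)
Your construction $M_i(x)=M_A\bigl(f_i(x)\bigr)$ is exactly the one the paper uses, and your additional verification that $A^{\ge_m^P}$ is the promise-restriction closure of the extremal problems $B_i=(f_i^{-1}(A_\text{yes}),f_i^{-1}(A_\text{no}))$, together with the observation that these collapse to decision problems when $A$ is one, correctly fills in details the paper leaves implicit. The proposal is correct and takes essentially the same approach as the paper.
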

\begin{proof}
Let $M_A$ the DTM that totally decides $A$ and $(f_i)_{i\in\mathbb{N}_0}$ the computable series of all polynomial-time computable functions $\Sigma^*\rightarrow \Sigma^*$. Then $(M_i)_{i\in\mathbb{N}_0}$ with
\begin{align*}
M_i(x)=M_A\big(f_i(x)\big)
\end{align*}
is a recursive representation (presentation) of all (decision) problems $B$ that are reducible to (the decision problem) $A$.
\end{proof}

This approach 
allows to prove recursive (re)presentation for complexity classes that are defined via a complete problem like $\TIM$, even if they are missing a machine-based definition. 

The enumerability of reduction functions (and polynomial-time oracle machines) can also be used to prove a result the other way around: all problems of usual complexity classes that are more difficult than a problem $A$ are recursively presentable as well. 
The proof basically resembles the proof for Cook-complete decision problems from \cite{schoening}:
\begin{lem}\label{lem:rpDeg}
Let $C$ be a recursively presentable complexity class c.f.v. and $A\in C$ be totally decidable. 
Then
\begin{align*}
A_C^{\le_m^P}:&=\{B\in C \,|\, A\le_m^P B\},\\
A_C^{\le_T^P}:&=\{B\in C \,|\, A\le_T^P B\}
\end{align*}
are recursively presentable.
\end{lem}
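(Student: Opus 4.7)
The argument closely follows Schoening's proof for Cook-complete decision problems, adapted to the promise setting via the total-decidability machinery of the previous subsection. I describe the $m$-reduction case in detail; the Cook case proceeds analogously.

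Let $(N_j)_{j\in\mathbb{N}_0}$ be the recursive presentation of $C$ and let $(f_i)_{i\in\mathbb{N}_0}$ be a computable enumeration of all polynomial-time functions $\Sigma^*\to\Sigma^*$ (as constructed in the proof of Lemma~\ref{lem:rp}); let $M_A$ be a DTM that totally decides $A$. For each pair $(i,j)$ I build a DTM $\tilde N_{i,j}$ whose action on input $y$ is: spend a budget of $|y|$ computational steps scanning $x=\varepsilon,0,1,00,\ldots$ and, for each $x$, compare $M_A(x)$ with $N_j(f_i(x))$, looking for a \emph{witness of inconsistency}, i.e.\ an $x$ with $M_A(x)=1$ but $N_j(f_i(x))\neq 1$, or with $M_A(x)=0$ but $N_j(f_i(x))\neq 0$. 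If no witness is found within the budget, output $N_j(y)$; if one is found, output $M_A(y)$, with a small hardcoded adjustment on two fixed inputs that ensures that whenever $A$ has a yes- (resp.\ no-) instance, so does the extremal problem of $\tilde N_{i,j}$.

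I then claim $(\tilde N_{i,j})_{i,j}$ recursively presents $A_C^{\le_m^P}$. If $f_i$ really witnesses $A\le_m^P P(N_j)$, no inconsistency ever exists, $\tilde N_{i,j}$ always outputs $N_j(y)$, and its extremal problem is $P(N_j)\in A_C^{\le_m^P}$. If $f_i$ fails, let $x_0$ be the first counterexample; then for all $y$ of length exceeding the fixed time needed to detect $x_0$ the witness is found within budget and $\tilde N_{i,j}$ outputs $M_A(y)$, while for short $y$ it outputs $N_j(y)$. Hence $P(\tilde N_{i,j})$ differs from $A$ on only finitely many inputs, lies in $C$ by closure under finite variations (using $A\in C$), and satisfies $A\le_m^P P(\tilde N_{i,j})$ via the identity reduction patched on the finite exceptional set to route the patched yes-/no-inputs to yes-/no-instances preserved by the hardcoded adjustment. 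Conversely, any $B\in A_C^{\le_m^P}$ equals $P(N_j)$ for some $j$ and admits a reduction $f=f_i$, so the pair $(i,j)$ is consistent and yields $P(\tilde N_{i,j})=B$.

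For $A_C^{\le_T^P}$ one replaces the enumeration of reduction functions by an enumeration of polynomial-time oracle DTMs, and the consistency search by a simulation that both checks $M_i^{P(N_j)}(x)=M_A(x)$ and verifies that no oracle query is posed on a non-promised input of $P(N_j)$; the latter is decidable because $N_j$ totally decides $P(N_j)$. The inconsistent-branch switch to $M_A$ once more produces a finite variation of $A$, which lies in $C$ by c.f.v.\ and to which $A$ trivially Cook-reduces by a polynomial-time DTM that handles the finite exceptional set internally. The main obstacle throughout is the inconsistent branch: the constructed finite variation must inherit $A$-hardness under the relevant reduction notion; c.f.v.\ secures membership in $C$, and the hardcoded preservation of a yes- and a no-instance of $A$ in the construction is precisely the extra care the promise-problem setting demands on top of the classical Schoening argument.
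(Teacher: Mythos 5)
Your construction is essentially the paper's own proof: enumerate pairs consisting of a machine from the recursive presentation of $C$ and a reduction function (resp.\ a polynomial-time oracle machine), test consistency with $M_A$ on an input-dependent initial segment (the paper checks all $y$ with $|y|\le|x|$ rather than imposing a step budget, which is immaterial), and fall back to $M_A$ upon detecting an inconsistency, so that the extremal problem is either $P(N_j)$ in the consistent case or a finite variation of $A$ otherwise, the latter staying in $C$ by closure under finite variations; the Cook case likewise adds the check that no oracle query leaves the promise. Your hardcoded preservation of a yes- and a no-instance to keep the fallback problem $A$-hard is a bit of extra care the paper silently omits; apart from that the two arguments coincide.
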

\begin{proof}
Let $(M_i)_{i\in\mathbb{N}_0}$ be a recursive presentation of $C$, $(f_i)_{i\in\mathbb{N}_0}$ the computable series of all polynomial-time computable functions $\Sigma^*\rightarrow\Sigma^*$ and $(O_i)_{i\in\mathbb{N}_0}$ the computable series of all polynomial-time oracle Turing machines. 
Let $M_A$ be the DTM that totally decides $A$. 

For the recursive presentation of the set $A_C^{\le_m^P}$ we define the DTM $N_i$, $i=(j,k)$, that on input $x$ checks forall $|y|\le |x|$ if
\begin{align*}
y\in A_\text{yes} \quad&\Rightarrow\quad f_j(y)\in P(M_k)_\text{yes}\\
y\in A_\text{no} \quad&\Rightarrow\quad f_j(y)\in P(M_k)_\text{no}
\end{align*}
is fulfilled. If yes, it outputs $M_k(x)$, otherwise $M_A(x)$.

For the recursive presentation of the set $A_C^{\le_T^P}$ we define the DTM $N_i$, $i=(j,k)$, that on input $x$ checks forall $|y|\le |x|$ if
\begin{align*}
y\in A_\text{yes} \quad&\Rightarrow\quad &&O_j\text{ with oracle }P(M_k)\text{ on input }y\\
&&&\text{only queries the oracle for promised}\\
&&&\text{inputs and accepts}\\
y\in A_\text{no} \quad&\Rightarrow\quad &&O_j\text{ with oracle }P(M_k)\text{ on input }y\\
&&&\text{only queries the oracle for promised}\\
&&&\text{inputs and rejects}
\end{align*}
is fulfilled. If yes, it outputs $M_k(x)$, otherwise $M_A(x)$.

$(N_i)_{i\in\mathbb{N}_0}$ is a recursive presentation of $A_C^{\le_m^P}$ ($A_C^{\le_T^P}$) since $P(N_i)=P(M_k)$ if $P(M_k)$ for $i=(j,k)$ is a problem of $C$ on that $A$ can be $m$- ($T$-)reduced while otherwise $P(N_i)=A$ almost everywhere.
\end{proof}

\begin{cor}
Let $C$ be a recursively presentable complexity class c.f.v. with at least one totally decidable $m$- ($T$-)complete problem. Then $C\operatorname{-c_m}$ ($C\operatorname{-c_T}$) is recursively presentable.
\end{cor}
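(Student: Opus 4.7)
The plan is to exhibit the set of $m$-complete ($T$-complete) problems of $C$ as a set of the form $A_C^{\le_m^P}$ ($A_C^{\le_T^P}$) to which Lemma~\ref{lem:rpDeg} can be applied. Concretely, let $A\in C$ be the given totally decidable $m$-complete ($T$-complete) problem.

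The key identity I would establish is
\begin{align*}
C\operatorname{-c_m} \;=\; A_C^{\le_m^P} \;=\; \{B\in C\,|\, A\le_m^P B\},
\end{align*}
and analogously for the $T$-case. The inclusion ``$\subseteq$'' is immediate: if $B$ is $m$-complete for $C$, then every problem of $C$ reduces to $B$; in particular $A\in C$ reduces to $B$. For the reverse inclusion ``$\supseteq$'', assume $B\in C$ satisfies $A\le_m^P B$. Since $A$ is $m$-complete, every $D\in C$ satisfies $D\le_m^P A$, so transitivity of Karp reducibility yields $D\le_m^P B$, making $B$ itself $m$-complete. The $T$-version uses transitivity of Cook reducibility, which holds by composing the oracle machines in the standard way.

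With this identity in hand, the result is an immediate consequence of Lemma~\ref{lem:rpDeg}: the hypotheses of that lemma (namely, $C$ recursively presentable and closed under finite variations, and $A\in C$ totally decidable) are exactly what we are given, so $A_C^{\le_m^P}$ and $A_C^{\le_T^P}$ are recursively presentable, and the corresponding sets of complete problems inherit this recursive presentation.

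There is no real obstacle here; the only subtlety worth double-checking is that composition of a polynomial-time Karp reduction with another Karp reduction is again polynomial-time (this is standard) and, for the $T$-case, that a polynomial-time oracle machine with oracle $B$ whose queries are fed through a reduction from $A$ to $B$ actually stays within the promise of $B$ whenever the outer reduction stays within the promise --- this is ensured because the outer machine $O_j$ witnessing $A\le_T^P B$ by hypothesis only queries promised instances of $B$, so the overall composed machine queries promised instances of $B$ whenever it is run on a promised input of the original problem.
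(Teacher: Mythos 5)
Your proposal is correct and matches the paper's intended argument: the corollary is stated as an immediate consequence of Lemma~\ref{lem:rpDeg}, obtained exactly by fixing the totally decidable complete problem $A$ and identifying $C\operatorname{-c_m}$ (resp.\ $C\operatorname{-c_T}$) with $A_C^{\le_m^P}$ (resp.\ $A_C^{\le_T^P}$) via transitivity of the reduction pre-order, which the paper has already noted holds for both notions. Your extra care about promise-respecting composition of Cook reductions is sound and consistent with the paper's conventions.
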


\section{Proof of the Uniform Diagonalization Theorem}

With the extended definitions and new notations introduced in the last section the proof of the extended Uniform Diagonalization Theorem resembles the original one limited to decision problems \cite{balcazar,schoening}.

Before we actually prove the Uniform Diagonalization Theorem we first restate the definition and an efficiency condition for the so-called \emph{gap language} $G[r]$, which allows us later to mix the two problems stated in the Uniform Diagonalization Theorem by restricting them to certain alternating intervals:

\begin{definition}
Let $r\in\mathbb{N}_0\rightarrow \mathbb{N}_0$ be a computable function with $r(m)>m$ for all $m\in\mathbb{N}_0$. The \emph{gap language generated by $r$} is defined as the set
\begin{align*}
G[r]:= \{x\in\Sigma^* \,|\, r^n(0) \le |x| < r^{n+1}(0) \text{ for $n$ even}\}
\end{align*}
with $r^n$ denoting the n-fold concatenation of $r$. 
\end{definition}


\begin{lem}\label{lem:G}
If $r:\mathbb{N}_0\rightarrow\mathbb{N}_0$ with $r(m)>m$ is time-constructible, then $(G[r],\overline{G[r]})\in\Pp$.
\end{lem}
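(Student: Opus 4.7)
The plan is to decide $(G[r], \overline{G[r]})$ in polynomial time by an algorithm that, on input $x$ of length $n$, computes the sequence of iterates $a_0 = 0,\, a_1 = r(a_0),\, a_2 = r(a_1),\dots$ until it locates the unique index $k$ with $a_k \le n < a_{k+1}$, and then accepts iff $k$ is even. Correctness is immediate from the definition of $G[r]$, so the real content of the proof is that this can be carried out in $\poly(n)$ time.

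First I would show how to evaluate $r$ using its time-constructor $M_{T(r)}$. Given some $m \le n$, I place any $m$-symbol string on a work tape (for instance the length-$m$ prefix of $x$) and simulate $M_{T(r)}$ on it while maintaining a step-counter on an auxiliary tape. Since $M_{T(r)}$ halts in exactly $r(m)$ steps, the counter at the moment of halting displays $r(m)$. Apart from polynomial counter-arithmetic overhead, this costs $r(m)$ simulation steps.

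Next I would iterate, starting from $a_0 = 0$ and computing $a_{i+1}$ from $a_i$ by the above procedure. Since $r(m) > m$ forces $a_{i+1} \ge a_i + 1$, the sequence strictly increases, so after at most $n+1$ iterations it first exceeds $n$. Checking the parity of the resulting $k$ then finishes the algorithm.

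The main obstacle I anticipate is that the final iterate $a_{k+1}$ can be astronomically large; if $r$ grows quickly (say exponentially), $a_{k+1}$ may far exceed any polynomial in $n$, so naively writing it out would blow up the runtime. I sidestep this by capping the simulation of $M_{T(r)}$ at $n+1$ steps whenever I compute a new iterate: either the simulation halts within that budget, in which case the counter yields $a_{i+1}$ exactly, or it does not, in which case I already know $r(a_i) > n$ and hence $a_{i+1} > n$ without ever recording its precise value. Each iterate then costs $O(n)$ time (plus polylogarithmic counter overhead), and with at most $n+1$ iterates the total runtime is $O(n^2)$, establishing membership in $\Pp$.
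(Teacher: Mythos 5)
Your proposal is correct and follows essentially the same route as the paper: iterate $r^k(0)$ by simulating the time-constructor with a step counter, abort any simulation whose counter exceeds $|x|$ (so oversized iterates are never written out), bound the number of iterations by $|x|+1$ via $r^n(0)\ge n$, and accept according to the parity of $k$. No gaps.
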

\begin{proof}
Compute iteratively $r(0)$, $r^2(0)$, $r^3(0)$ ... like in the proof of lemma \ref{lem:timeConstructibility}. Abort the iteration if the counter during the computation of $r^{k}(0)$ reaches $|x|$. Accept if $k-1=n$ is even, otherwise reject.

Revisiting the proof of lemma \ref{lem:timeConstructibility}.1 shows that the computation of $r(m)$ is efficient in $r(m)$ and so is an aborted simulation in the final counter. Hence every iteration step is efficient in $|x|$. Since $r^n(0)\ge n$, the number of iteration steps is limited by $|x|+1$ and the above algorithm is an efficient decision algorithm for $(G[r],\overline{G[r]})$. 
\end{proof}

Now, we can finally prove the Uniform Diagonalization Theorem and construct a problem $B$ from two problems $A\notin C$ and $A'\notin C'$ that inherits the property $B\notin C\cup C'$ while being Karp-reducible to the marked union $A\oplus A'$.

\begin{definition}
The \emph{marked union} $A \oplus A'$ of two promise problems $A$ and $A'$ is defined as the promise problem $D$ with
\begin{align*}
D_\text{yes} &= \{0x | x\in A_\text{yes}\} \cup \{1x | x\in A'_\text{yes}\}\\
D_\text{no} &= \{0x | x\in A_\text{no}\} \cup \{1x | x\in A'_\text{no}\}\text{.}
\end{align*}
\end{definition}

\begin{thm}[Uniform Diagonalization Theorem]\label{thm:UDT}
Let $C$, $C'$ be complexity classes closed unter finite variations of which each is recursively presentable or recursively representable. 
Let $A\notin C$, $A'\notin C'$ be totally decidable promise problems. Then there exists a totally decidable promise problem $B$ such that
\begin{align*}
B\notin C\cup C' \text{ and } B\le_m^P A \oplus A'\text{.}
\end{align*}
If $A$ and $A'$ are extremal for one of the complexity classes from $\mathcal{C}$ or decision problems, 
then so is $B$.
\end{thm}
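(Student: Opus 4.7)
The plan is to mimic the classical Schöning--Balcázar diagonalization, blending $A$ on certain length intervals with $A'$ on the complementary intervals, where the intervals are controlled by a time-constructible gap function. Fix enumerations $(M_i)_{i\in\mathbb{N}_0}$ and $(M'_i)_{i\in\mathbb{N}_0}$ of halting DTMs recursively presenting $C$ (resp.\ the presentable subclass whose closure under promise restriction is $C$) and likewise $C'$. Because $C$ and $C'$ are closed under finite variations, the hypothesis $A\notin C$ forces, for each $i$, the disagreement set
\[
D_i := (A_\text{yes}\setminus P(M_i)_\text{yes})\cup(A_\text{no}\setminus P(M_i)_\text{no})
\]
to be infinite: otherwise $A$ would differ by only finitely many instances from the subproblem $A^* := (A_\text{yes}\cap P(M_i)_\text{yes},\, A_\text{no}\cap P(M_i)_\text{no})\in C$, so $A\in C$ by c.f.v., contradiction. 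The analogous $D'_i$ for $A'$ and $M'_i$ is infinite for the same reason.

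Next I would construct the gap function. Define a sequence $(a_n)_{n\in\mathbb{N}_0}$ recursively by $a_0=0$ and $a_{n+1}$ equal to the smallest $N>a_n$ for which there exists some $x$ with $a_n\le |x|<N$ lying in $D_{n/2}$ (if $n$ is even) or $D'_{(n-1)/2}$ (if $n$ is odd). This sequence is computable because $A$, $A'$ and the enumerated machines are all totally decidable / halting, and it is total because every $D_i$ and $D'_i$ is infinite. A standard padding argument based on Lemma \ref{lem:timeConstructibility} then yields a time-constructible function $r:\mathbb{N}_0\to\mathbb{N}_0$ satisfying $r(m)>m$ and $r^n(0)=a_n$ for all $n$, so that each interval $[r^n(0),r^{n+1}(0))$ contains a diagonalization witness for the $n$-th machine in the interleaved enumeration.

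Now set
\[
B_\text{yes}:=(A_\text{yes}\cap G[r])\cup(A'_\text{yes}\cap \overline{G[r]}), \quad B_\text{no}:=(A_\text{no}\cap G[r])\cup(A'_\text{no}\cap \overline{G[r]}).
\]
Total decidability of $B$ follows from $A$, $A'$ and the polynomial-time decidability of $G[r]$ supplied by Lemma \ref{lem:G}. The function $f(x):=0x$ if $x\in G[r]$ and $f(x):=1x$ otherwise is polynomial-time computable by the same lemma and realises $B\le_m^P A\oplus A'$. For $B\notin C$: assume for contradiction $B\in C$; then $B$ is a subproblem of some $P(M_i)$. At stage $n=2i$, the diagonalization witness $x\in D_i$ with $|x|\in[r^{2i}(0),r^{2i+1}(0))\subseteq G[r]$ satisfies $B(x)=A(x)$ and violates either $B_\text{yes}\subseteq P(M_i)_\text{yes}$ or $B_\text{no}\subseteq P(M_i)_\text{no}$, contradiction. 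The argument for $B\notin C'$ is symmetric on odd stages.

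For the extremality addendum, if $A$ and $A'$ are the extremal problems of machines $M_A$ and $M_{A'}$ from classes in $\mathcal{C}$, I would build the dispatching machine $M_B$ that first checks $|x|\in G[r]$ in polynomial time and then simulates $M_A$ or $M_{A'}$ accordingly; by construction $B=P(M_B)$, and $M_B$ lies in the larger of the two classes along the $\mathcal{C}$-hierarchy. If instead $A$ and $A'$ are decision problems, their promises cover $\Sigma^*$ and so does $G[r]\cup\overline{G[r]}$, making $B$ a decision problem. The chief technical obstacle will be the padding argument in the second paragraph: the naïve definition of $r$ from $(a_n)$ is not time-constructible, so one must carefully combine Lemma \ref{lem:timeConstructibility} with length-dependent padding to simultaneously secure time-constructibility, $r(m)>m$, and the identification $r^n(0)=a_n$ that makes the $G[r]$-intervals align with the diagonalization stages.
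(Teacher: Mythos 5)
Your overall strategy is the paper's: mix $A$ and $A'$ on alternating length intervals governed by a time-constructible gap function, diagonalize against the enumerated machines, and read off the reduction from $G[r]\in\Pp$. But there are two genuine gaps. The first concerns your disagreement set $D_i = (A_\text{yes}\setminus P(M_i)_\text{yes})\cup(A_\text{no}\setminus P(M_i)_\text{no})$, which is the one-sided difference $A\backslash P(M_i)$. Your infinitude argument --- ``otherwise $A$ differs finitely from the subproblem $A^*\in C$'' --- silently assumes that $C$ contains subproblems of its members, i.e.\ that $C$ is closed under promise restriction. For a class that is merely recursively \emph{presentable} (such as $\Pp$, $\NP$, or the extremal classes $C^*$ with $C\in\mathcal{C}$, which are the main use cases), membership means $A = P(M_i)$ exactly, and $A\notin C$ is perfectly compatible with $A$ being a proper subproblem of some $P(M_i)$, in which case $D_i=\varnothing$ and your recursion for $a_{n+1}$ never terminates at that stage. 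The paper handles this by taking the contradicting element from the \emph{total} symmetric difference $A\triangle P(M_i) = (A\backslash P(M_i))\cup(P(M_i)\backslash A)$ in the presentable case; closure under finite variations then does guarantee infinitude, and a witness in $P(M_i)\backslash A$ lying in $G[r]$ still contradicts $B=P(M_i)$ in the final step. You need this case split.

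The second gap is the one you yourself flag but do not resolve: demanding both that $r$ be time-constructible and that $r^n(0)=a_n$ exactly is in general impossible, because a machine halting in exactly $a_{n+1}$ steps on an input of length $a_n$ would have to determine $a_{n+1}$ within $a_{n+1}$ steps, whereas computing $a_{n+1}$ requires simulating the enumerated machines and the total decider of $A$, which may take far longer. No padding rescues this. The paper's way out is to abandon the exact alignment: define $q(n):=\max_{i\le n}\{|z_{i,n}|\}+1$ where $z_{i,n}$ is the least witness of length greater than $n$ for machine $M_i$, so that \emph{every} interval $[n,q(n))$ contains a witness for \emph{every} $i\le n$, and then take any time-constructible $r\ge\max\{q,q'\}$ via Lemma \ref{lem:timeConstructibility}. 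The interval boundaries $r^n(0)$ then land wherever they land, and the contradiction argument only needs some even $m$ with $n:=r^m(0)\ge i$. Your stage-$2i$ bookkeeping must be replaced by this ``all $i\le n$'' quantification for the delayed diagonalization to go through. The remaining components (the reduction $f$, total decidability of $B$, and the extremality/decision-problem addendum) match the paper and are fine.
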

\begin{proof}
Let $M_0,M_1,M_2, \dots$ and $M'_0,M'_1,M'_2,\dots$ be recursive representations (presentations) for  
the complexity classes $C$ and $C'$, respectively.
Due to $A\notin C$, every $M_i$ does not (totally) decide correctly some input with regard to the problem $A$ (for a recursively representable class $C$ such a ``contradicting'' element can only be from $A_\text{yes}\cup A_\text{no}$; for a recursive presentable class also instances from $\overline{A_\text{yes}\cup A_\text{no}}$ can be contradicting, namely iff $M_i$ accepts or rejects). The same holds for $C'$ and $A'$.
 The construction idea for the new problem $B$ is to ``mix'' $A$ and $A'$ such that $B$ inherits a ``contradicting'' element for each $M_i$ and $M_i'$. 

To define a valid promise problem we mix $A$ and $A'$ by restricting them to alternating intervals via the previously defined gap language, i.e.
\begin{align*}
B_\text{yes}&=(G[r]\cap A_\text{yes}) \cup (\overline{G[r]}\cap A'_\text{yes})\\
B_\text{no}&=(G[r]\cap A_\text{no}) \cup (\overline{G[r]}\cap A'_\text{no})
\end{align*}
with a properly chosen function $r:\mathbb{N}_0\rightarrow\mathbb{N}_0$. Clearly, in this form $B=(B_\text{yes},B_\text{no})$ is a valid promise problem. 
The function $r$ has to be chosen such that for each $M_i$ the union of all even intervals, i.e. $G[r]$, contains an element contradicting $A$ 
 and that for all each $M'_i$ the union of all odd intervals, i.e. $\overline{G[r]}$, contains an element contradicting $A'$ (see figure \ref{fig:intervals}).

We achieve this by defining the function $q:\mathbb{N}_0\rightarrow\mathbb{N}_0$,
\begin{align*}
q(n) := \max_{i\le n} \{|z_{i,n}|\} +1
\end{align*}
with $z_{i,n}\in\Sigma^*$ the smallest word according to the usual 
\onecolumngrid
\begin{center}
\begin{figure}[h] 
\includegraphics[width=0.8\textwidth]{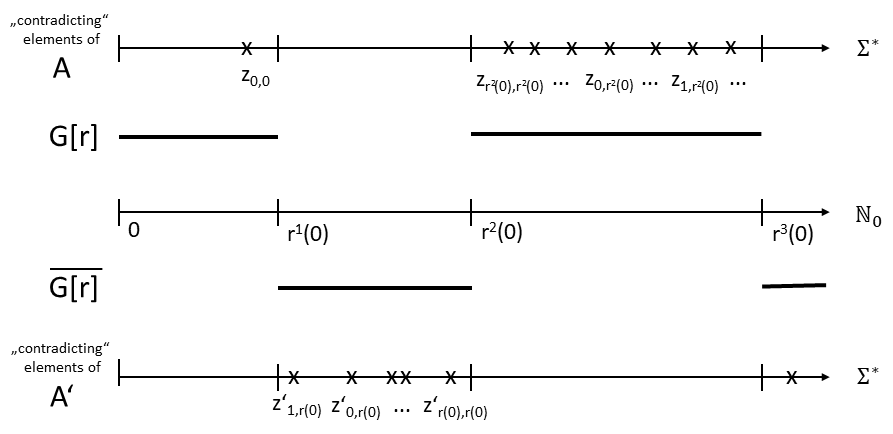}
\caption{Mixing of the two problems $A$ and $A'$ including all necessary elements to push it outside $C$ and $C'$.}
\label{fig:intervals}
\end{figure}
\end{center}
\twocolumngrid
binary order such that $|z_{i,n}|>n$ and 
\begin{align*}
z_{i,n}\in
\begin{cases}
A\backslash P(M_i) \quad&\text{ if }C\text{ recursively representable}\\
A \triangle P(M_i) \quad&\text{ if }C\text{ recursively presentable.}
\end{cases}
\end{align*}

Notice that $z_{i,n}$ always exists. Otherwise, 
\begin{align*}
A\subseteq P(M_i)\text{ a.e.} \quad&\text{ if }C\text{ recursively representable}\\
A= P(M_i)\text{ a.e.} \quad&\text{ if }C\text{ recursively presentable}
\end{align*}
and since $C$ is closed under finite variations (and promise restriction in the first case) this implies $A\in C$, which contradicts the hypothesis of the theorem.

The total decidability of $A$ and the existence of $z_{i,n}$ imply the computability of $q$. Analogously, the function
\begin{align*}
q'(n):=\max_{i\le n} \{|z'_{i,n}|\}+1
\end{align*}
is computable with $z'_{i,n}\in\Sigma^*$ defined accordingly for $A'$ and $M'_i$.

We now choose our desired function $r$ as the time-constructible function
\begin{align*}
r(n)\ge \max\{q(n),q'(n)\}
\end{align*}
that exists according to lemma \ref{lem:timeConstructibility}. 
The definitions of $q(n)$ and $q'(n)$ imply $r(n)>n$. Hence, the gap language $G[r]$ is well-defined. And since it is decidable, $B$ is totally decidable.

Notice that the most important step of the proof is indeed to choose the time-constructible function $r$ instead of $q$ and $q'$ for defining the interval jumps of the gap language. For this function lemma \ref{lem:G} tells us that $(G[r],\overline{G[r]})\in\Pp$ and hence
$f:\Sigma^*\rightarrow\Sigma^*$ with
\begin{align*}
f(x):= 
\begin{cases}
0x \quad \text{ if } x\in G[r]\\
1x \quad \text{ if } x\in \overline{G[r]}
\end{cases}
\end{align*}
is a valid Karp-reduction from $B$ to $A\oplus A'$. 
This complexity bound on $B$ is the essence of of the Uniform Diagonalization Theorem, since just finding a problem outside two complexity classes is trivial when it can be chosen arbitrarily more difficult. 

$G[r]\in\Pp$ also implies that $B$ is extremal if $A$ and $A'$ are extremal for a complexity class $C\in\mathcal{C}$, since every class $C\in\mathcal{C}$ is capable of performing the polynomial-time decision algorithm for $G[r]$ as initial subroutine before simulating the algorithm for $A$ or $A'$.

If one had chosen the interval jumps not at $r(n)$ but exactly at $q(n)$ and $q'(n)$, i.e. exactly at the highest necessary contradicting element, then the determination of the interval containing an input $x$ can be far from efficient.  For this one actually would have to compute the contradicting elements which means a simulation of all machines $M_i$ with $i\le n$ 
for which we cannot fix a general polynomial runtime bound. The trick is that $r(n)$ is defined larger than the maximum of these contradicting elements \emph{and} the runtime that it needs to compute them (recall the proof of Lemma \ref{lem:timeConstructibility}). So usually $r(n)\gg q(n)$ and the contradicting elements in figure \ref{fig:intervals} should be drawn cumulated at the lower interval limits. This is also the reason why the proof technique is sometimes referred to as ``delayed diagonalization'' \cite{downeyFortnow}.
The check if the next interval limit lies above the input $x$ can now not only be answered positively 
by the output of the iterative computation of contradicting elements but also when this computation exceeds a runtime of $|x|$, which is obviously efficient.

After we proved the reduction statement of the theorem, it only remains to show rigorously that the defined problem $B$ lies indeed outside the two complexity classes $C$ and $C'$. Assume $B\in C$. This means there exists an $i\in\mathbb{N}_0$ such that
\begin{align*}
B\backslash P(M_i)=\varnothing\quad&\text{ if }C\text{ recursively representable}\\
B\triangle P(M_i)=\varnothing\quad&\text{ if }C\text{ recursively presentable.}
\end{align*}
Let $m$ be an even integer such that $n:=r^m(0)\ge i$. As we have seen above there exists $z_{i,n}\in [{n, r(n)}[$ 
with $z_{i,n}\in A\backslash P(M_i)$ ($z_{i,n}\in A\triangle P(M_i)$). But since $z_{i,n}\in G[r]$, this implies $z_{i,n}\in B\backslash P(M_i)$ ($z_{i,n}\in B\triangle P(M_i)$) which is impossible. Hence, our initial assumption is wrong and $B\notin C$. Analogously, it can be proven that $B\notin C'$.

This completes the proof of the Uniform Diagonalization Theorem. 
\end{proof}

\section{Implications}

This section briefly lists the most important implications of the Uniform Diagonalization Theorem formulated for the classes $\QMA$ and $\BQP$. This list is far from being comprehensive. 
While $\QMA$ and $\BQP$ can be substituted by many other pairs of recursively (re)presentable classes, 
we like to stress 
again that these implication are not known to hold for $\BPP$ and $\MA$ due to our lacking knowledge about their recursive presentability and complete problems. 

In contrast to proving recursive representability like in lemma \ref{lem:rp}, we can use the Uniform Diagonalization Theorem firstly 
as a tool  
 to prove that certain classes are not recursively presentable: 

\begin{cor}\label{cor:recRep1}
The class $\QMA^*\backslash\BQP^*$ is not recursively presentable.
\end{cor}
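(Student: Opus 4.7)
The plan is to apply the Uniform Diagonalization Theorem (Theorem~\ref{thm:UDT}) with $C:=\QMA^*\backslash\BQP^*$ and $C':=\BQP^*$ and derive a contradiction from the assumption that $C$ is recursively presentable. The skeleton will be the standard ``UDT forbids recursive presentation'' move: if $C$ were recursively presentable then UDT would produce a problem that simultaneously sits inside $\QMA^*$, outside $\BQP^*$, and outside $C$, which is impossible.

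First I would note that the corollary is only meaningful when $\QMA^*\neq\BQP^*$, since otherwise $C$ is empty and trivially recursively presentable; I take this separation as the standing hypothesis, in line with the other Ladner-style implications announced in the introduction. Assuming for contradiction that $C$ is recursively presentable, the class $C'=\BQP^*$ is recursively presentable by Lemma~\ref{lem:rp}. Next I need to check the two remaining UDT hypotheses: closure under finite variations, and totally decidable witnesses outside each class. For $\BQP^*$ closure under finite variations is routine: given an extremal $\BQP$-problem $A=P(M)$ and a promise problem $B$ with $A\triangle B$ finite, one hard-codes the finitely many deviations into a modified $\BQP$-machine whose extremal problem is $B$. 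For $C=\QMA^*\backslash\BQP^*$ the same hard-coding preserves membership in $\QMA^*$, while closure of $\BQP^*$ under finite variations prevents a finite modification from drifting a problem from outside $\BQP^*$ to inside it.

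With the setup in place I would pick any $A\in C$ (which exists by the separation hypothesis) and any $A'\in\BQP^*$; both are totally decidable by Lemma~\ref{lem:circuitTotalDec}, and by construction $A\notin C'$, $A'\notin C$. Theorem~\ref{thm:UDT} then yields a totally decidable extremal $B$ with $B\notin C\cup C'$ and $B\le_m^P A\oplus A'$. Since $A,A'\in\QMA$, the marked union lies in $\QMA$ (use the marker bit to route the input to the appropriate algorithm), and Karp-closure of $\QMA$ together with the extremality clause of the theorem deliver $B\in\QMA^*$. But then $B\in\QMA^*$ and $B\notin\BQP^*$ jointly force $B\in\QMA^*\backslash\BQP^*=C$, contradicting $B\notin C$.

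The principal obstacle I anticipate is the careful verification that $\QMA^*\backslash\BQP^*$ is closed under finite variations with the extremality constraint intact; once that, together with the recursive presentability of $\BQP^*$, is in place, UDT does all the real work. A secondary point is to argue that the constructed $B$ lands specifically in $\QMA^*$ rather than merely in $\QMA$, which follows from the remark in the UDT proof that $G[r]\in\Pp$ can be spliced as a polynomial-time subroutine in front of either the $\QMA$-algorithm for $A$ or the $\BQP$-algorithm for $A'$, giving a single $\QMA$-machine whose extremal problem is exactly $B$.
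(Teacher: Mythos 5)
Your argument is essentially the paper's own proof: apply Theorem~\ref{thm:UDT} with the pair $\{\QMA^*\backslash\BQP^*,\ \BQP^*\}$, one witness problem from each side, and derive the contradiction that the resulting $B$ lies in $\QMA^*=(\QMA^*\backslash\BQP^*)\cup\BQP^*$ while UDT places it outside both pieces. (The paper fixes $A=(\varnothing,\Sigma^*)$ and $A'\in\QMA^*\backslash\BQP^*$; your choice of arbitrary witnesses, with the roles of $C$ and $C'$ swapped relative to the theorem's labelling, is the same argument up to symmetry, and your explicit check of closure under finite variations is a detail the paper leaves implicit.)

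The one point where you diverge is the degenerate case $\BQP=\QMA$: you dismiss it by asserting that the empty class is ``trivially recursively presentable'' and therefore demote the separation to a standing hypothesis. Under the paper's definition this is backwards: a recursive presentation is a computable series $M_0,M_1,\dots$ of halting DTMs indexed over $\mathbb{N}_0$, and each $M_i$ totally decides some promise problem, so any recursively presentable class is nonempty. The paper uses exactly this observation to conclude that if $\QMA^*\backslash\BQP^*=\varnothing$ it is \emph{not} recursively presentable, which is what makes the corollary unconditional. As written, your proof establishes only the conditional statement; the fix is a one-line appeal to the nonemptiness of recursively presentable classes rather than an added hypothesis.
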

\begin{proof}
If $\BQP=\QMA$, then $\QMA^*\backslash\BQP^*$ is empty and therefore not recursively presentable. Let's hence consider the case $\BQP\subsetneq\QMA$. Then there exists a problem $A'\in\QMA^*\backslash\BQP^*$. Let's assume that $\QMA^*\backslash\BQP^*$ is recursively presentable. Clearly, $A=(\varnothing,\Sigma^*)$, $A'$, 
$C=\QMA^*\backslash\BQP^*$ and $C'=\BQP^*$
fulfill the hypothesis of the Uniform Diagonalization Theorem. The problem $B$ constructed in the Uniform Diagonalization Theorem is Karp-reducible to $A \oplus A'$ and hence 
in $\QMA$. Indeed, it is even in $\QMA^*$ since $A$ and $A'$ are extremal for $\QMA$. 
On the other hand, the Uniform Diagonalization Theorem tells us that $B\notin C\cup C'=\QMA^*$ which is a contradiction. Hence, $\QMA^*\backslash\BQP^*$ is not recursively presentable.
\end{proof}

\begin{cor}\label{cor:recRep2}
The classes $\QMA^*\backslash \QMA^*\operatorname{-c_m}$ and $\QMA^*\backslash \QMA^*\operatorname{-c_T}$ are not recursively presentable under the assumption that $\QMA$ does not equal the closure of $\Pp$ under promise restriction.
\end{cor}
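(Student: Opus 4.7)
My plan is to mirror the proof of Corollary~\ref{cor:recRep1}, now using $\QMA^*\operatorname{-c_m}$ in place of $\BQP^*$. First I would observe that $\QMA^*\operatorname{-c_m}$ is recursively presentable: the extremal problem $A_0$ of a $\QMA$-machine deciding $k$-Local Hamiltonian is totally decidable by Lemma~\ref{lem:circuitTotalDec} and remains $m$-complete for $\QMA^*$ (any reduction targeting $k$-LH also targets the larger extremal promise), so the corollary following Lemma~\ref{lem:rpDeg} applies; the analogous statement with $T$-reductions handles $\QMA^*\operatorname{-c_T}$.

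Next, suppose for contradiction that $C':=\QMA^*\backslash\QMA^*\operatorname{-c_m}$ is recursively presentable. Set $C:=\QMA^*\operatorname{-c_m}$ and apply Theorem~\ref{thm:UDT} with $A:=A_0\in C$ (hence $A\notin C'$) and $A':=(\varnothing,\Sigma^*)$, which is extremal for $\QMA$ via an always-rejecting quantum circuit family; since $A'$ has no yes-instance, it cannot serve as the image of a Karp reduction from any problem whose yes-set is non-empty, so $A'\in C'$ and $A'\notin C$. Theorem~\ref{thm:UDT} then yields a totally decidable $B$, extremal for $\QMA$, with $B\le_m^P A\oplus A'$ and $B\notin C\cup C'=\QMA^*$. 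But $A,A'\in\QMA^*$ entails $A\oplus A'\in\QMA^*$, and the extremality clause of the theorem gives $B\in\QMA^*$, contradicting the preceding exclusion. The $T$-case runs identically with $\operatorname{c_T}$ in place of $\operatorname{c_m}$, using additionally that an oracle machine with a polynomial-time decidable oracle is itself polynomial-time.

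The delicate step -- and the one that actually consumes the hypothesis $\QMA\neq$ closure of $\Pp$ under promise restriction -- will be the verification that $C$ and $C'$ are closed under finite variations as required by Theorem~\ref{thm:UDT}. For a non-trivial element of $\QMA^*\operatorname{-c_m}$ closure is routine by hardcoding the altered inputs through fixed yes/no targets of the modified problem. The danger is that a trivial element of $C'$, e.g.\ $(\varnothing,\Sigma^*)$, becomes after adding a single yes-instance $x_0$ a non-trivial problem $A''$ that turns out to be $m$-complete and hence leaves $C'$. However, any Karp reduction of an arbitrary $B$ to $A''$ -- whose only yes-instance is $x_0$ -- would yield a polynomial-time decision procedure for $B$ by checking whether the reduction output equals $x_0$, so $m$-completeness of such $A''$ would force every $\QMA^*$ problem to lie in the $\Pp$-closure, contradicting the hypothesis. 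This rules out the pathology and gives c.f.v.\ of both $C$ and $C'$, at which point the UDT-based argument above delivers the desired contradiction.
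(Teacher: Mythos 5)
Your proof is correct and follows essentially the same route as the paper's: the same classes $C=\QMA^*\operatorname{-c_m}$ and $C'=\QMA^*\backslash\QMA^*\operatorname{-c_m}$, the same two problems ($k$-LH$^*$ and $(\varnothing,\Sigma^*)$), recursive presentability of $\QMA^*\operatorname{-c_m}$ via the corollary to Lemma~\ref{lem:rpDeg}, the contradiction $B\in\QMA^*$ versus $B\notin C\cup C'=\QMA^*$, and the same use of the hypothesis ($\QMA$ not equal to the $\Pp$-closure forces complete problems to have infinitely many yes- and no-instances) to secure closure under finite variations. The one cosmetic point: you instantiate Theorem~\ref{thm:UDT} with $A:=A_0\in C$ and $A':=(\varnothing,\Sigma^*)\in C'$, i.e.\ with the pairing $A\notin C'$, $A'\notin C$, which is the reverse of the stated hypotheses ``$A\notin C$, $A'\notin C'$''; this is harmless by the symmetry of the theorem, but you should swap the labels of $A$ and $A'$ (as the paper does) so the hypotheses are met literally.
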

\begin{proof}
This follows analogously to corollary \ref{cor:recRep1} by substituting 
the problem $A'=k$-LH$^*$ (the extremal problem of the $\QMA$-machine that decides the Local Hamiltonian problem) 
and the respective complexity classes
\begin{alignat*}{2}
C&=\QMA^*\operatorname{-c_m} \qquad C'&&=\QMA^*\backslash \QMA^*\operatorname{-c_m},\\
C&=\QMA^*\operatorname{-c_T} \qquad C'&&=\QMA^*\backslash \QMA^*\operatorname{-c_T}\text{.}
\end{alignat*}
Notice that 
these classes are closed under finite variations since the assumption $\QMA$ is strictly more powerful than the closure of $\Pp$ under promise restriction implies that $m$- and $T$-complete problems have infinitely many yes- and no-intances. 
\end{proof}

The above results on their own might not seem very intriguing, but they reveal their whole power in the following implications:
\begin{cor}\label{cor:dec1}
Given a $\QMA$-machine it is undecidable if its extremal problem is in $\BQP$ assuming $\BQP\subsetneq\QMA$.
\end{cor}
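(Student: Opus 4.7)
The natural strategy is a reduction from Corollary~\ref{cor:recRep1}: if membership of a $\QMA$-extremal problem in $\BQP$ were decidable, we could sieve a recursive presentation of $\QMA^*$ down to a recursive presentation of $\QMA^*\backslash \BQP^*$, contradicting the corollary.

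First, I would set up the contradiction. Assume there exists a DTM $D$ that, on input (the G{\"o}del number of) a $\QMA$-machine $N$, halts and outputs $1$ iff $P(N)\in\BQP$ and $0$ otherwise. By Lemma~\ref{lem:rp} there is a computable series $(M_i)_{i\in\mathbb{N}_0}$ of DTMs that recursively presents $\QMA^*$, and by construction (in the proof of that lemma) each $M_i$ is obtained uniformly from an underlying $\QMA$-machine $\tilde M_i$ with $P(\tilde M_i)=P(M_i)$; so $D$ applied to $\tilde M_i$ decides whether $P(M_i)\in\BQP$.

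Second, using the hypothesis $\BQP\subsetneq\QMA$, I would fix a ``default'' index. Since $\BQP^*\subsetneq\QMA^*$, the search that iterates $i=0,1,2,\dots$ and evaluates $D(\tilde M_i)$ must eventually find some $i_0$ with $P(M_{i_0})\notin\BQP$; this search is effective and terminates. Now define a new computable series $(N_i)_{i\in\mathbb{N}_0}$ by
\begin{align*}
N_i := \begin{cases} M_i & \text{if } D(\tilde M_i)=0,\\ M_{i_0} & \text{if } D(\tilde M_i)=1. \end{cases}
\end{align*}
Each $N_i$ is a halting DTM (all $M_j$ are), and the function $i\mapsto N_i$ is computable because $D$ halts on every input. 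Moreover $\{P(N_i)\mid i\in\mathbb{N}_0\} = \QMA^*\backslash \BQP^*$: every $P(N_i)$ lies in $\QMA^*\backslash \BQP^*$ by the case distinction, and conversely any $A\in\QMA^*\backslash \BQP^*$ equals $P(M_j)$ for some $j$ in the original presentation, whence $D(\tilde M_j)=0$ and $A=P(N_j)$. Hence $(N_i)$ is a recursive presentation of $\QMA^*\backslash \BQP^*$, contradicting Corollary~\ref{cor:recRep1}, and proving the corollary.

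The only delicate point I anticipate is the bookkeeping between a $\QMA$-machine (the circuit-generating DTM) and the DTM that totally decides its extremal problem, since the problem statement phrases decidability with respect to the former while Corollary~\ref{cor:recRep1} is phrased in terms of the latter. This is not a genuine obstacle: Lemma~\ref{lem:circuitTotalDec} shows that the total decider can be computed uniformly from the $\QMA$-machine, so the two formulations of the decision problem are interreducible and either presentation may be used interchangeably.
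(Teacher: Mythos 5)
Your proof is correct and follows essentially the same route as the paper: use the hypothetical decider to filter a recursive presentation of $\QMA^*$ into one of $\QMA^*\backslash\BQP^*$, contradicting Corollary~\ref{cor:recRep1}. The only immaterial difference is that the paper takes the DTM totally deciding $k$-LH$^*$ as the fixed replacement machine, whereas you locate a replacement index $i_0$ by an effective search using the decider itself.
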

\begin{proof}
Assume $\BQP\subsetneq\QMA$ and it is decidable whether the extremal problem of a $\QMA$-machine is in $\BQP$. Let $M_1$, $M_2$, ... be a recursive presentation of $\QMA^*$. By substituting every $M_i$ whose extremal problem is in $\BQP$ by the $\QMA$-machine deciding the $k$-LH$^*$ problem, we obtain a recursive presentation of $\QMA^*\backslash\BQP^*$ which is a contradiction to corollary \ref{cor:recRep1}. Hence, it is undecidable if the extremal problem of a $\QMA$-machine is in $\BQP$.
\end{proof}

\begin{cor}\label{cor:dec2}
Given a $\QMA$-machine it is undecidable if its extremal problem is $m$-complete ($T$-complete) under the assumption that $\QMA$ does not equal the closure of $\Pp$ under promise restriction.
\end{cor}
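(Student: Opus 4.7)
The plan is to mirror the argument of Corollary \ref{cor:dec1} almost verbatim, invoking Corollary \ref{cor:recRep2} in the role of Corollary \ref{cor:recRep1}. I will spell out the $m$-complete case; the $T$-complete case follows by the same template after replacing the decider and the cited recursive-presentability statement by their $T$-variants.

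The main step is to assume, for contradiction, that some DTM $D$ decides, given the G\"odel number of an arbitrary $\QMA$-machine, whether its extremal problem is $m$-complete for $\QMA$. Starting from the recursive presentation $M_1, M_2, \ldots$ of $\QMA^*$ provided by Lemma \ref{lem:rp}, I then build a new computable sequence $N_1, N_2, \ldots$ by running $D$ on each $M_i$: whenever $D$ declares $M_i$ to be $m$-complete I substitute it by a fixed ``safe'' $\QMA$-machine $M^\circ$ whose extremal problem lies in $\QMA^* \setminus \QMA^*\operatorname{-c_m}$, and otherwise I leave $M_i$ unchanged. Since $D$ halts on every input, $(N_i)$ is computable, and by construction its extremal problems range precisely over $\QMA^* \setminus \QMA^*\operatorname{-c_m}$; this would recursively present that class and so contradict Corollary \ref{cor:recRep2}.

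The step needing the most care is exhibiting the replacement machine $M^\circ$. The natural candidate is the trivial $\QMA$-machine that rejects on every input and every witness, whose extremal problem is $(\varnothing, \Sigma^*)$. Any Karp-reduction to $(\varnothing,\Sigma^*)$ must map all yes-instances into the empty set, so only problems with empty yes-set can be $m$-reduced to it. The standing hypothesis that $\QMA$ is strictly larger than the closure of $\Pp$ under promise restriction guarantees that $\QMA$ contains problems with non-empty yes-set (the empty-yes problems all sit inside that closure already), so $(\varnothing, \Sigma^*)$ is not $m$-complete for $\QMA$. The same $M^\circ$ also serves the $T$-complete case, because a $T$-oracle for $(\varnothing, \Sigma^*)$ returns $0$ on every promised query and therefore cannot be used to decide any problem with a non-empty yes-set.
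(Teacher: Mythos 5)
Your overall strategy is exactly the paper's: the paper proves this corollary with the single word ``Analogously,'' meaning the substitution argument of Corollary \ref{cor:dec1} with Corollary \ref{cor:recRep2} in place of Corollary \ref{cor:recRep1}, and your construction of the sequence $(N_i)$ with the always-rejecting machine $M^\circ$ (extremal problem $(\varnothing,\Sigma^*)$) as the non-complete replacement is the intended instantiation. The $m$-complete case is argued correctly. However, your justification for the $T$-complete case contains a false claim: an oracle for $(\varnothing,\Sigma^*)$ that answers $0$ on every query can certainly ``be used to decide'' problems with non-empty yes-sets --- for instance, any decision problem in $\Pp$ is trivially Cook-reducible to $(\varnothing,\Sigma^*)$ by ignoring the oracle. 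The correct argument is that a polynomial-time oracle DTM with oracle $(\varnothing,\Sigma^*)$ can be simulated by a plain polynomial-time DTM (every query is promised and answered $0$), so the problems $T$-reducible to $(\varnothing,\Sigma^*)$ all lie in the closure of $\Pp$ under promise restriction; the standing hypothesis that $\QMA$ strictly exceeds this closure then yields a $\QMA$-problem not $T$-reducible to $(\varnothing,\Sigma^*)$, so $(\varnothing,\Sigma^*)\notin\QMA^*\operatorname{-c_T}$. The conclusion you need is true and the repair uses only the hypothesis you already invoke, but the sentence as written would not survive refereeing.
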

\begin{proof}
Analogously.
\end{proof}

Originally \cite{schoening} proved the above corollaries for combinations of the complexity classes $\NP$ $\Pp$, $\PSPACE$ and $\PH$ as well as complement classes such as $\co$-$\NP$. We omit a corresponding version here, since complement classes are not very common to consider for sets of promise problems. One reason might be that some structural consequences that hold for complement classes of decision problems do not hold for promise problems. E.g., if a problem in $\co$-$\NP$ turned out to be $\NP$-complete under Cook reductions, then this would imply $\NP=\co$-$\NP$ \cite{even, goldreich}. But there is no analogous implication known for classes of promise problems like for $\QMA$ and its complement.
\vspace*{1em}

Undecidability results form a first branch of implications of the Uniform Diagonalization Theorem. The second branch of implications -- proving the existence of intermediate problems -- is established by Ladner's simplification of the theorem:

\begin{thm}[Ladner's theorem]\label{thm:ladner_C}
Let $A$ be a promise problem in $\QMA^*\backslash\BQP^*$. 
Then there exists a problem $B\in\QMA^*\backslash\BQP^*$ with $B\le_m^P A$ and $A \nleq_T^P B$.
\end{thm}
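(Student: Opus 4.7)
The plan is to derive Theorem~\ref{thm:ladner_C} from a single application of the Uniform Diagonalization Theorem (Theorem~\ref{thm:UDT}) with a carefully chosen pair of classes. I would set
\begin{align*}
C := \BQP^{*}, \qquad C' := A_{\QMA^{*}}^{\le_T^{P}} = \{D\in\QMA^{*} \,|\, A\le_T^{P} D\}.
\end{align*}
The class $C$ is recursively presentable by Lemma~\ref{lem:rp}, and $C'$ is recursively presentable by Lemma~\ref{lem:rpDeg}, whose hypotheses follow from $A\in\QMA^{*}$ being totally decidable (Lemma~\ref{lem:circuitTotalDec}) together with $\QMA^{*}$ being recursively presentable and closed under finite variations. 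Both $C$ and $C'$ are themselves closed under finite variations: for $\BQP^{*}$ one replaces the underlying $\BQP$-machine on the finitely many differing inputs by a hard-coded accept/reject/ambiguous routine, and for $C'$ one additionally modifies the Cook reduction so that oracle queries on those inputs are answered directly rather than through the oracle.

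Next I would choose $A':=(\Sigma^{*},\varnothing)$, the trivial ``always-accept'' decision problem, which is totally decidable and extremal for both a $\BQP$- and a $\QMA$-machine. The hypothesis $A\in\QMA^{*}\setminus\BQP^{*}$ gives $A\notin C$ directly. To verify $A'\notin C'$, note that $A\le_T^{P} A'$ would reduce to a polynomial-time DTM deciding $A$ on its promise (the trivial oracle contributes no information); combining such a decider with the totally-deciding DTM afforded by $A\in\QMA^{*}$, one constructs a $\BQP$-machine whose extremal problem is $A$ (accept or reject on promised inputs according to the fast decider, place $P_\text{acc}$ strictly between $1/3$ and $2/3$ on inputs the total decider flags as non-promised), contradicting $A\notin\BQP^{*}$. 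Hence $A'\notin C'$ and all hypotheses of Theorem~\ref{thm:UDT} are satisfied.

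Applying Theorem~\ref{thm:UDT} yields a totally decidable $B$ with $B\notin C\cup C' = \BQP^{*}\cup A_{\QMA^{*}}^{\le_T^{P}}$ and $B\le_m^{P} A\oplus A'$. Since $A$ and $A'$ are both extremal for $\QMA$, the final clause of the theorem gives $B\in\QMA^{*}$, so $B\in\QMA^{*}\setminus\BQP^{*}$ and $A\nleq_T^{P}B$. To strengthen the conclusion to $B\le_m^{P}A$, I would observe that $A$ must contain at least one yes- and one no-instance (else a constant $\BQP$-machine would place $A$ into $\BQP^{*}$), fix a reference yes-instance $y_0\in A_\text{yes}$, and Karp-reduce $A\oplus A'$ to $A$ by the map $0x\mapsto x$, $1x\mapsto y_0$. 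Composing with the reduction supplied by the UDT then gives $B\le_m^{P}A$.

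The step I expect to be hardest is the middle one: rigorously ruling out $A\le_T^{P}A'$, i.e.\ ensuring that no $\QMA^{*}\setminus\BQP^{*}$ problem is decidable by a polynomial-time DTM on its promise. The sketch above merges a putative polynomial-time promise decider with the (a priori slow) totally-deciding DTM from Lemma~\ref{lem:circuitTotalDec}, and the delicate point is to realise this merge by an honest polynomial-time $\BQP$-circuit whose extremal promise reproduces $A$ exactly, including the strict $(1/3,2/3)$ behaviour on non-promised inputs.
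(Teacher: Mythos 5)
Your proposal follows essentially the same route as the paper: the paper also sets $C=\BQP^*$ and $C'=\{D\in\QMA^*\,|\,A\le_T^P D\}$ (recursively presentable via Lemmas \ref{lem:rp} and \ref{lem:rpDeg}), takes a trivial one-sided problem as $A'$ (the paper uses the always-reject problem $(\varnothing,\Sigma^*)$ rather than your always-accept one, so its collapsing map sends $1x$ to a default no-instance instead of a yes-instance; this difference is immaterial), applies Theorem \ref{thm:UDT}, reads off $B\in\QMA^*\setminus\BQP^*$ and $A\nleq_T^P B$ from $B\notin C\cup C'$ together with extremality, and post-composes the Karp reduction to pass from $A\oplus A'$ to $A$.

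The one point where you go beyond the paper --- explicitly verifying $A'\notin C'$ --- is where your argument does not close, and you have correctly identified this yourself. From $A\le_T^P A'$ with a trivial oracle you obtain a polynomial-time DTM $M$ that decides $A$ on its promise; but the proposed merge of $M$ with the total decider of Lemma \ref{lem:circuitTotalDec} cannot be realized by a polynomial-time generated circuit family, because recognizing the non-promised inputs (where the acceptance probability must be placed strictly between $1/3$ and $2/3$) is exactly the part for which no polynomial runtime bound is available. What the existence of $M$ actually yields is that $A$ is a promise restriction of the $\Pp$ decision problem $(\{x: M(x)=1\},\{x: M(x)\ne 1\})$, hence $A\in\BQP$ by closure under promise restriction --- which contradicts $A\notin\BQP$ but not literally the stated hypothesis $A\notin\BQP^*$. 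The paper itself never verifies $A'\notin C'$ (it simply asserts that the hypotheses of Theorem \ref{thm:UDT} are fulfilled), so relative to the paper you have not introduced a gap; you have surfaced one that the paper leaves implicit, and which disappears if the theorem is read with the hypothesis $A\notin\BQP$ in place of $A\notin\BQP^*$. The remaining ingredients of your write-up (closure under finite variations of $C$ and $C'$, extremality of $B$, existence of the reference instance for the collapsing map) match the paper or are routine.
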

\begin{proof}
$C:=\BQP^*$ is recursively presentable according to lemma \ref{lem:rp} and so is $C':=\{D\in\QMA^* \,|\, A\le_T^P D)$ according to lemma \ref{lem:rpDeg}. These complexity classes and $A$ and $A':=(\varnothing, \Sigma^*)$ hence fulfill the hypothesis of the Uniform Diagonalization Theorem. Moreover, $A$ and $A'$ are extremal for $\QMA$. 

Consequently, there exists a problem $B\in\QMA^*$ such that $B\notin \BQP^*$, $A\nleq_T^P B$ and $B\le_m^P A \oplus A'$. The last condition simplifies to $B\le_m^P A$ since the reduction function can be concatenated by the polynomial-time computable function that maps every string with an initial $0x$ to $x$ and every $1$ to a default no-instance of $A$ (which has to exist due to $A\notin\BQP^*$).
\end{proof}

\begin{cor}\label{cor:intermediate}
If $\BQP\subsetneq\QMA$, then there exists an infinite hierarchy of intermediate problems between $\QMA$ and $\BQP$ (regarding both Karp- and Cook-reductions). 
\end{cor}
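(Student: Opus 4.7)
The plan is a straightforward iterated application of Ladner's theorem (Theorem~\ref{thm:ladner_C}). Two things are needed: a starting problem $A_0\in\QMA^*\setminus\BQP^*$, and the observation that the strict Cook-separation produced by Ladner also gives strict Karp-separation.

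First I would produce the starting problem. From $\BQP\subsetneq\QMA$ pick any $P\in\QMA\setminus\BQP$ and let $M$ be a $\QMA$-machine witnessing $P\in\QMA$. Then $P$ is a subproblem of the extremal problem $P(M)\in\QMA^*$. If we had $P(M)\in\BQP^*\subseteq\BQP$, then since $\BQP$ is closed under promise restriction, $P\in\BQP$, contradicting the choice of $P$. Thus $A_0:=P(M)\in\QMA^*\setminus\BQP^*$.

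Next, I would iterate: given $A_i\in\QMA^*\setminus\BQP^*$, Theorem~\ref{thm:ladner_C} furnishes a problem $A_{i+1}\in\QMA^*\setminus\BQP^*$ such that
\begin{align*}
A_{i+1}\le_m^P A_i \qquad\text{and}\qquad A_i\nleq_T^P A_{i+1}.
\end{align*}
Starting from $A_0$, this inductively produces an infinite sequence $A_0,A_1,A_2,\dots$ entirely inside $\QMA^*\setminus\BQP^*$.

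Finally, the hierarchy is strict in both reduction senses. The Cook-nonreducibility $A_i\nleq_T^P A_{i+1}$ is explicit in the construction. For Karp-strictness, if we had $A_i\le_m^P A_{i+1}$, then Lemma~\ref{lem:KarpCook} would give $A_i\le_T^P A_{i+1}$, a contradiction. Hence $A_{i+1}<_m^P A_i$ and $A_{i+1}<_T^P A_i$ for every $i$, producing the required infinite hierarchy of intermediate problems between $\BQP$ and $\QMA$. The one step that genuinely requires care is the base case — ensuring $\BQP\subsetneq\QMA$ yields a witness in the extremal subclass $\QMA^*\setminus\BQP^*$ — but this is settled by the closure of $\BQP$ under promise restriction as above; everything else is bookkeeping on top of Ladner.
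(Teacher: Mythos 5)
Your proposal is correct and matches the paper's (implicit) intended argument: the corollary is exactly the iteration of Theorem~\ref{thm:ladner_C}, with strictness under $\le_m^P$ inherited from strictness under $\le_T^P$ via Lemma~\ref{lem:KarpCook}. Your care with the base case is warranted and handled properly; the paper itself would simply take $A_0 = k\text{-LH}^*$, which lies in $\QMA^*\setminus\BQP^*$ for the same closure-under-promise-restriction reason you give, but your argument from an arbitrary witness $P\in\QMA\setminus\BQP$ works equally well.
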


Since Ladner's Theorem constructs the intermediate problem as a mixture of the hard problem $A$ and the constant-no problem $A'$, the series of intermediate problems between $A=k$-LH$^*$ and $\BQP$ are variants of the Local Hamiltonian problem with more and more yes-instances turned into no-instances. This is why Ladner's original proof is also called the method of ``blowing holes into the complete problem'' \cite{downeyFortnow}.

Does this descriptive property of the 
intermediate problems tell us something about the difficulty of specific Local Hamiltonian instances? Unfortunately, the criteria of kicking certain Local Hamiltonian instances out of the yes-instances is far from having any physical meaning. Instead it results from the fact that a binary string is interpreted as two unrelated encodings. If a Local Hamiltonian yes-instance remains an intermediate problem's yes-instance depends on the behaviour of certain Turing machines on certain inputs, both determined by the encoding of the Local Hamiltonian instance and the specific G{\"odel} numbering chosen for Turing machines. Due to the large degree of freedom in both encoding schemes, the holes blown into the Local Hamiltonian problem to make it easier are rather artificial than physically meaningful.

\vspace{1em}

Since recursive presentability of extremal problems also holds for all other complexity classes of promise problems introduced in this paper, it follows:
\begin{cor}
The above corollaries  \ref{cor:recRep1}, \ref{cor:dec1}, \ref{cor:intermediate} and Ladner's Theorem hold accordingly for every pair of the complexity classes $\Pp$, $\NP$, $\PromiseBPP$, $\PromiseMA$, $\BQP$, $\QCMA$, $\QMA$ and the noisy $\QMA$-variant $\QMA_\mathcal{E}$ with $\mathcal{E}=(\mathcal{E}_m)_{m\in\mathbb{N}}$ a quantum channel family whose Stinespring dilation over the field of algebraic numbers is computable.

Corollaries \ref{cor:recRep2} and \ref{cor:dec2} also hold for any of the beforementioned classes with a complete problem instead of $\QMA$.
\end{cor}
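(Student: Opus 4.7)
The plan is to observe that the four previously cited results -- Corollaries \ref{cor:recRep1}, \ref{cor:dec1}, \ref{cor:intermediate} and Theorem \ref{thm:ladner_C} -- are derivations from the Uniform Diagonalization Theorem that depend on the specific pair $\QMA/\BQP$ only through four class-level properties: (i) recursive presentability of the extremal restriction $C^*$, (ii) closure under finite variations, (iii) the ability of the machine model to execute a polynomial-time subroutine (in particular the $G[r]$-decider from Lemma \ref{lem:G}) before simulating the given algorithm, which is what makes the Karp reduction of Theorem \ref{thm:UDT} respect extremality, and (iv) the existence of a totally decidable non-trivial problem in the class that can play the role of $A'$.

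First I would verify (i)--(iv) for every class on the list. Property (i) is delivered by Lemma \ref{lem:rp} for $\Pp$, $\NP$, $\PromiseBPP$, $\PromiseMA$, $\BQP$, $\QCMA$ and $\QMA$ and by the companion corollary for $\QMA_\mathcal{E}^*$; property (ii) is immediate from Definitions \ref{def:P}--\ref{def:QMA} together with the defining specification of $\QMA_\mathcal{E}$, because any finite variation can be absorbed by hard-coding a finite table into the machine without affecting its runtime bounds or acceptance gaps; property (iii) holds because each listed class contains (the closure under promise restriction of) $\Pp$ and its machine model can therefore run the $G[r]$-algorithm as an initial stage; and (iv) is trivially witnessed by $(\varnothing,\Sigma^*)$ together with any non-trivial totally decidable element of the class.

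Given (i)--(iv), I would then replay the four proofs verbatim with $(C,C')$ in place of $(\QMA,\BQP)$ and with an arbitrary totally decidable witness of $C^*\setminus C'^*$ substituted for $k$-LH$^*$ where the latter appeared; since no step of the original arguments invokes any further feature of $\QMA$ or $\BQP$, each step remains valid. For Corollaries \ref{cor:recRep2} and \ref{cor:dec2} the only extra ingredient is a totally decidable $m$- or $T$-complete problem to take the role of $k$-LH$^*$; this is supplied by the table of complete problems for every listed class except $\BPP$ and $\MA$, and total decidability of the complete problem can always be arranged by passing to the extremal problem of a deciding machine and appealing to Lemmata \ref{lem:probTotalDec} and \ref{lem:circuitTotalDec}.

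The main obstacle, insofar as there is one, is the hypothesis that $C\operatorname{-c_m}$ (respectively $C\operatorname{-c_T}$) be closed under finite variations. This holds only once every complete problem has infinitely many yes- and no-instances, which in turn requires that $C$ be strictly larger than the closure of $\Pp$ under promise restriction -- exactly the non-triviality assumption that is already built into the $\QMA$ version of Corollary \ref{cor:recRep2} and which must be threaded through case by case. Once this assumption is carried along, the arguments of Corollaries \ref{cor:recRep2} and \ref{cor:dec2} transfer without further change.
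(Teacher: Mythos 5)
Your proposal is correct and follows essentially the same route as the paper, which justifies this corollary only by the one-line observation that recursive presentability of extremal problems (Lemma \ref{lem:rp} and its corollaries), closure under finite variations, and the existence of totally decidable complete problems carry over to all the listed classes. Your more explicit inventory of the four properties actually used, and your remark that the non-triviality hypothesis must be threaded through Corollaries \ref{cor:recRep2} and \ref{cor:dec2}, simply spells out what the paper leaves implicit.
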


\section{Discussion}

We adapted the Uniform Diagonalization Theorem and its most important implications to complexity classes of promise problems and showed that standard randomized and quantum complexity classes fulfill the recursive (re)presentability property required by the Theorem. In order not to overload the paper we explicitly showed this for the classes $\PromiseBPP$, $\PromiseMA$, $\BQP$, $\QCMA$ and $\QMA$, but the argumentation is easily adaptable to most natural randomized and quantum complexity classes including classes without probability gap such as $\PP$ 
and classes with different time- or space-restrictions and an interactive or multi-witness extension.
We also argued that recursive (re)presentability is still fulfilled if the gate set of the quantum computing model is extended to all matrices over algebraic numbers, which is relevant for partly-deterministic classes such as $\QMA_1$. 


Besides the formulation of further corollaries for specific 
 interesting pairs of classes like \cite{schoening,balcazar} list for decision problems, it remains the non-trivial task to adapt the here presented form of the Uniform Diagonalization Theorem to more stricter reduction notions than Karp to make the theorem applicable to randomized and quantum classes with a runtime time below polynomial. For decision problems and log-space and log-time-reductions this has been achieved by \cite{vollmer,reganVollmer}. 

While it is clear that different reduction notions are needed for complexity classes below $\Pp$, complexity classes above are usually considered within the framework of Karp reductions. That this might be to strict shows the the consistency problem for local density matrices \cite{liu} which is not known to be $\QMA$-complete under Karp but under randomized Cook reductions. For classes above $\Pp$ it should be legitimate to consider weaker reduction notions (e.g. randomized or quantum) as long as they are ``simulable'' within this class. Since these reduction notions are implied by Karp, we can simply substitute them in the formulation of the Uniform Diagonalization Theorem. Ladner's Theorem holds for these reductions as well as long as the complexity class problems harder than the supplied problem 
remain recursively presentable. 


Another field of investigation arising from this paper is the structure of non-extremal problems. 
Several questions arise that do not seem trivial to answer. To list only some of them: 
\begin{itemize}
\item The advantage of our extremality definition is that it allows the recursive presentation of the extremal problems for standard complexity classes. The disadvantage is that the notion only applies to classes with a machine-based definition (and hence e.g. not to $\TIM$). Alternatively, a problem could be defined as extremal for a complexity class, if the class does not contain any supproblem. Are the such defined extremal problems recursively presentable? 
\item Does there exists a machine-based definition for $\TIM$ and hence a reasonable notion of extremality? 
\item Do there exist non-complete problems that are only decidable by machines whose extremal problem is complete? Notice, that if yes, it is indeed important to state in Ladner's Theorem that $B\in\QMA^*\backslash\BQP^*$. The weaker statement $B\in\QMA\backslash\BQP$ would be meaningless, since $B$ could simply be a nonextremal problem of $\QMA$ whose extremal supproblem is as least as difficult as $A$. 
\item We proved recursive presentability of $A_C^{\le_m^P}$ and $A_C^{\le_T^P}$ for recursively presentable complexity classes $C$. Is the analogous set for recursively representable classes also recursively presentable? 
\item Is every complete problem extremal? 
\item Are all totally decidable problems of a class $C\in\mathcal{C}$ extremal? 
\end{itemize}





\begin{acknowledgments}
I thank Tobias J. Osborne for our regular discussions and Alex Grilo for several useful remarks. 
This  work  was  supported  by  the  ERC grants QFTCMPS, and SIQS, and through the DFG by the cluster of excellence EXC 201 Quantum FQ Engineering and Space-Time Research, and the Research Training Group 1991.
\end{acknowledgments}

\bibliographystyle{ieeetr}
\bibliography{bib}

\begin{thebibliography}{10}

\bibitem{schaefer}
T.~Sch{\"a}fer, ``The complexity of satisfiability problems,'' 10th ACM
  Symposium on Theory of computing, pp.~216--226, ACM, 1978.

\bibitem{ladner}
R.~E. Ladner, ``On the structure of polynomial time reducibility,'' {\em JACM},
  vol.~22, pp.~155--171, Jan. 1975.

\bibitem{LHclassification}
T.~Cubitt and A.~Montanaro, ``Complexity classification of local hamiltonian
  problems,'' {\em SIAM Journal on Computing}, vol.~45, no.~2, pp.~268--316,
  2016.
\newblock arXiv:1311.3161.

\bibitem{schoening}
U.~Sch{\"o}ning, ``A uniform approach to obtain diagonal sets in complexity
  classes,'' {\em Theoretical Computer Science}, vol.~18, pp.~95 -- 103, 1982.

\bibitem{balcazar}
J.~L. Balcázar, J.~Diaz, and J.~Gabarr{\'o}, {\em {Structural Complexity I}}.
\newblock Springer, 1995.

\bibitem{schmidt}
D.~Schmidt, ``The recursion-theoretic structure of complexity classes,'' {\em
  Theoretical Computer Science}, vol.~38, pp.~143 -- 156, 1985.

\bibitem{vollmer}
H.~Vollmer, ``{The Gap-Language-Technique Revisited},'' 4th Workshop on
  Computer Science Logic, pp.~389--399, Springer, 1991.

\bibitem{reganVollmer}
K.~W. Regan and H.~Vollmer, ``Gap-languages and log-time complexity classes,''
  {\em Theoretical Computer Science}, vol.~188, no.~1, pp.~101 -- 116, 1997.

\bibitem{watrousReview}
J.~Watrous, {\em {Quantum Computational Complexity}}, pp.~7174--7201.
\newblock Springer New York, 2009.
\newblock arXiv:0804.3401.

\bibitem{cook}
S.~A. Cook, ``The complexity of theorem-proving procedures,'' 3rd STOC '71,
  (New York, USA), pp.~151--158, ACM, 1971.

\bibitem{goldreichPromise}
O.~Goldreich, {\em Theoretical Computer Science}, ch.~On Promise Problems: A
  Survey, pp.~254--290.
\newblock Springer, 2006.

\bibitem{StoqMA}
S.~Bravyi, D.~P. DiVincenzo, R.~I. Oliveira, and B.~M. Terhal, ``{The
  Complexity of Stoquastic Local Hamiltonian Problems},'' {\em Quantum Info.
  Comput.}, vol.~8, no.~5, pp.~0361--0385, 2008.
\newblock arXiv:quant-ph/0606140.

\bibitem{knillBQP}
E.~Knill and R.~Laflamme, ``Quantum computing and quadratically signed weight
  enumerators,'' {\em Information Processing Letters}, vol.~79, no.~4, pp.~173
  -- 179, 2001.
\newblock arXiv:quant-ph/9909094.

\bibitem{GSC}
S.~Gharibian and J.~Sikora, ``Ground state connectivity of local
  hamiltonians,'' Automata, Languages, and Programming: 42nd International
  Colloquium, ICALP 2015, pp.~617--628, Springer, 2015.
\newblock arXiv:1409.3182.

\bibitem{NPsurvey}
D.~Aharonov and T.~Naveh, ``{Quantum NP - A Survey}.'' arXiv:quant-ph/0210077,
  October 2002.

\bibitem{2LH}
J.~Kempe, A.~Kitaev, and O.~Regev, ``The complexity of the local hamiltonian
  problem,'' {\em SIAM Journal on Computing}, vol.~35, no.~5, pp.~1070--1097,
  2006.
\newblock arXiv:quant-ph/0406180.

\bibitem{kitaevBook}
A.~Y. Kitaev, A.~H. Shen, and M.~N. Vyalyi, {\em {Classical and Quantum
  Computation (Graduate Studies in Mathematics)}}, vol.~47.
\newblock AMS, June 2002.

\bibitem{papadimitriou}
C.~H. Papadimitriou, {\em {Computational Complexity}}.
\newblock Addison Wesley Longman, 1994.

\bibitem{arora}
S.~Arora and B.~Barak, {\em {Computational Complexity -- A Modern Approach}}.
\newblock Cambridge University Press, 2009.

\bibitem{goldreich}
O.~Goldreich, {\em {Computational Complexity -- A Coneceptual Perspective}}.
\newblock Cambridge University Press, 2008.

\bibitem{strzebonski}
A.~W. Strzebo\'nski, ``Computing in the field of complex algebraic numbers,''
  {\em Journal of Symbolic Computation}, vol.~24, no.~6, pp.~647 -- 656, 1997.

\bibitem{QMA1}
S.~Bravyi, ``{Efficient Algorithm for a Quantum Analogue of 2-SAT}.''
  arXiv:quant-ph/0602108, February 2006.

\bibitem{noisyQMA}
F.~A. Dziemba, ``{Robustness of QMA against witness noise},'' {\em Quantum
  Info. Comput.}, vol.~17, no.~13{\&}14, pp.~1167--1190.
\newblock arXiv:1611.07332.

\bibitem{downeyFortnow}
R.~Downey and L.~Fortnow, ``Uniformly hard languages,'' {\em Theor. Comput.
  Sci.}, vol.~298, pp.~303--315, Apr. 2003.

\bibitem{even}
S.~Even, A.~L. Selman, and Y.~Yacobi, ``The complexity of promise problems with
  applications to public-key cryptography,'' {\em Information and Control},
  vol.~61, no.~2, pp.~159 -- 173, 1984.

\bibitem{liu}
Y.-K. Liu, {\em {Consistency of Local Density Matrices is QMA-Complete}},
  pp.~438--449.
\newblock Springer Berlin Heidelberg, 2006.

\end{thebibliography}


\end{document}